\documentclass[a4paper,10pt]{article}

\newcommand{\thetitle}{The fundamental theorem of asset pricing under proportional transaction costs}
\newcommand{\theauthor}{Alet Roux}
\newcommand{\theinstitution}{University of York}

\title{\thetitle}
\author{\theauthor\thanks{Department of Mathematics, \theinstitution. Email: ar521\@york.ac.uk.}}

\usepackage{amsfonts}

\usepackage{amsmath}
\DeclareMathOperator{\successors}{succ}
\DeclareMathOperator{\conv}{conv}
\DeclareMathOperator{\cone}{cone}

\usepackage{harvard}
\citationstyle{dcu}
\bibliographystyle{agsm}

\usepackage{amsthm}
\theoremstyle{definition}
\newtheorem{theorem}{Theorem}
\newtheorem{lemma}[theorem]{Lemma}
\newtheorem{proposition}[theorem]{Proposition}
\theoremstyle{definition}
\newtheorem{definition}[theorem]{Definition}
\theoremstyle{remark}
\newtheorem{remark}[theorem]{Remark}
\newtheoremstyle{citing}
  {3pt}
  {3pt}
  {\itshape}
  {}
  {\bfseries}
  {.}
  {.5em}
  {\thmnote{#3}}
\theoremstyle{citing}
\newtheorem*{varthm}{}

\begin{document}

\maketitle

\begin{abstract}
We extend the fundamental theorem of asset pricing to a model where the risky stock is subject to proportional transaction costs in the form of bid-ask spreads and the bank account has different interest rates for borrowing and lending. We show that such a model is free of arbitrage if and only if one can embed in it a friction-free model that is itself free of arbitrage, in the sense that there exists an artificial friction-free price for the stock between its bid and ask prices and an artificial interest rate between the borrowing and lending interest rates such that, if one discounts this stock price by this interest rate, then the resulting process is a martingale under some non-degenerate probability measure. Restricting ourselves to the simple case of a finite number of time steps and a finite number of possible outcomes for the stock price, the proof follows by combining classical arguments based on finite-dimensional separation theorems with duality results from linear optimisation.
\end{abstract}

The fundamental theorem of asset pricing characterises models of financial markets without arbitrage or free lunch, i.e.~the making of risk-free profit without initial investment. It is well known that a classical friction-free model containing a risky stock and a bank account admits no arbitrage if and only if there exists a probability measure on the model under which the stock price, discounted by the interest rate on the bank account, is a martingale. In turn, the collection of such probability measures play a fundamental role in the pricing of contingent claims and derivative securities.

The aim of this paper is to extend the fundamental theorem to a model where the risky stock is subject to proportional transaction costs in the form of bid-ask spreads, so that a share can always be sold for the bid price and bought for the ask price, and the bank account has different interest rates for borrowing and lending. We show that such a model is free of arbitrage if and only if one can embed in it a friction-free model that is itself free of arbitrage, in the sense that there exists an artificial friction-free price for the stock between its bid and ask prices and an artificial interest rate between the borrowing and lending interest rates such that, if one discounts this stock price by this interest rate, then the resulting process is a martingale under some non-degenerate probability measure.

Section~\ref{sec:fundamentals:the-model} is a formal introduction to a model of a financial market consisting of a risky asset (a stock) and a risk-free asset (a bond, referred to as cash). The model is subject to friction in the sense that proportional transaction costs are payable on the stock, and cash accrues interest at different rates of interest, depending on whether the cash balance is positive or negative. Participants in this financial market trade in both assets, subject to reasonable restrictions. We demonstrate in Section~\ref{sec:fundamentals:trading-arbitrage} that in order to prohibit arbitrage in the presence of transaction costs, it is sufficient to be able to embed in our model an artificial friction-free model, consisting of stocks and cash, that is again arbitrage-free. We refer to such an embedded friction-free model as an equivalent martingale triple (see Definition~\ref{def:fundamentals:trading-arbitrage:equivalent-martingale-triple}). In Section~\ref{sec:fundamentals:ftap} we show that, conversely, if our model admits no arbitrage, then it must contain an equivalent martingale triple. The main result of this paper is Theorem~\ref{th:fundamentals:ftap}, which extends the well-known fundamental theorem of asset pricing to the present case.

\section{The model}
\label{sec:fundamentals:the-model}

Consider a discrete-time market model with finite time horizon \(T\in\mathbb{N}\) and finite state space \(\Omega\), and let \(\mathbb{T}:=\{0,\ldots,T\}\) be the set of trading dates. Let~\(\mathbb{Q}\) be a probability measure on the collection \(2^{\Omega}\) of all subsets of \(\Omega\) such that~$\mathbb{Q}(\{\omega\})>0$ for all $\omega\in\Omega.$ We assume given a filtration \((\mathcal{F}_t)_{t\in\mathbb{T}}\) on the set~\(\Omega\), with \mbox{\(\mathcal{F}_0=\{\emptyset,\Omega\}\)} being the trivial \(\sigma\)-field and \(\mathcal{F}_T=2^{\Omega}\). For any \(t\in\mathbb{T}\), we denote by \(\Omega_t\) the collection of atoms of \(\mathcal{F}_t\). We shall often identify \(\mathcal{F}_t\)-measurable random variables with functions defined on \(\Omega_t\). Observe that \(\Omega_0=\{\Omega\}\) and \(\Omega_T\) is the collection of single-element subsets of~\(\Omega\).

It is natural to view such a model as an \emph{event tree}. We identify the \emph{nodes} of the tree at time \(t\in\mathbb{T}\) with the elements of \(\Omega_t\). By virtue of the properties of~\((\mathcal{F}_t)_{t\in\mathbb{T}}\), there is only one \emph{root node} at time \(0\), and it is associated with the certain event. Likewise, there is a one-to-one correspondence between \emph{terminal nodes} at time \(T\) and scenarios \(\omega\in\Omega\). At any time \(t\in\mathbb{T}\backslash\{T\}\) and any node \mbox{\(\nu\in\Omega_t\)}, we call a node \(\mu\in\Omega_{t+1}\) a \emph{successor} to \(\nu\) if \(\mu\subseteq\nu\). Denote by
\[\successors\nu:=\{\mu\in\Omega_{t+1}|\mu\subseteq\nu\}\]
the collection of successor nodes of \(\nu\).

The model consists of two securities, namely a risk-free \emph{bond} and a risky \emph{stock}. Proportional transaction costs apply to trading in the stock, i.e.\ at any time \(t\in\mathbb{T},\) the stock can be sold for the \emph{bid price} \(S^b_t\) and bought for the \emph{ask price} \(S^a_t\). We assume that \(S^b\) and \(S^a\) are adapted to \((\mathcal{F}_t)_{t\in\mathbb{T}}\), and that
\begin{equation}\label{eq:fundamentals:the-model:stock-no-arbitrage}
S^a_t\ge S^b_t>0\text{ for }t\in\mathbb{T}.
\end{equation}

Option pricing in models with proportional transaction costs have been studied by
\citeasnoun{leland1985}, \citeasnoun{hodges_neuberger1989}, \citeasnoun{davis_norman1990}, \citeasnoun{merton1990}, 
\citeasnoun{boyle_vorst1992}, \citeasnoun{bensaid_lesne_pages_scheinkman1992}, 
\citeasnoun{hoggard_whalley_wilmott1994}, 
\citeasnoun{soner_shreve_cvitanic1995},
\citeasnoun{cvitanic_karatzas1996}, 
\citename{perrakis_lefoll1997} \citeyear{perrakis_lefoll1997,perrakis_lefoll2000,perrakis_lefoll2004}, \citename{stettner1997} \citeyear{stettner1997,stettner2000}, 
\citename{constantinides_zariphopoulou1999} \citeyear{constantinides_zariphopoulou1999,constantinides_zariphopoulou2001}, 
\citename{koehl_pham_touzi1999} \citeyear{koehl_pham_touzi1999,koehl_pham_touzi2001},  
\citeasnoun{chalasani_jha2001}, 
\citeasnoun{ortu2001}, 
\citeasnoun{delbaen_kabanov_valkeila2002}, 
\citeasnoun{gondzio_kouwenberg_vorst2003}, 
\citeasnoun{sass2005}, \citeasnoun{albanese_tompaidis2006}, \citeasnoun{chen_palmer_sheu2006a} and many others. In the vast majority of these papers the authors assume the existence of a classical friction-free stock price process \(S\), and introduce transaction costs by setting
\begin{align*}
S^a &:= (1+\lambda)S, &
S^b &:= (1-\mu)S
\end{align*}
for some constant coefficients \(\lambda\ge0\) and \(\mu\in[0,1)\). Here we follow the more general approach of \citeasnoun{jouini_kallal1995a} by modelling bid and ask prices as two different processes. As pointed out by \citeasnoun[pp.~548--549]{jouini2000}, the bid-ask spread on stock prices, though interpreted as variable transaction costs, can also be explained by the buying and selling of limit orders. The bid and ask prices of the stock are the prices for which a buyer or seller is guaranteed to find an immediate counter-party to execute the trade, i.e.~the bid and ask prices ensure liquidity in the market.


Our model also contains a \emph{bank account} with different lending and borrowing rates, described by predictable interest rate processes $r^c$ (for borrowing) and $r^d$ (for lending). For convenience let $r^d_0 := 0$ and $r^c_0 := 0$. We assume that $r^c$ and~$r^d$ satisfy
\begin{equation}
r^c_t \ge r^d_t > -1\text{ for } t\in\mathbb{T}. \label{eq:fundamentals:the-model:bond-no-arbitrage}
\end{equation}
Define
\[\varrho_t(\alpha):=\alpha^+\left(1+r^d_t\right)-\alpha^-\left(1+r^c_t\right)\]
for \(t\in\mathbb{T}\) and \(\alpha\in\mathbb{R}\), where
\(x^+\equiv\max\{x,0\}\) and \(x^-\equiv\max\{-x,0\}\). An investment of \(\alpha\) units of cash in the bank account at time \((t-1)\in\mathbb{T}\backslash\{T\}\) accumulates to \(\varrho_t(\alpha)\) in cash at time \(t\).

\begin{remark}\label{rem:fundamentals:the-model:constrains-costs}
Over single periods, the presence of different borrowing and lending rates can be viewed as proportional transaction costs on bonds. Indeed, at time \(t\in\mathbb{T}\backslash\{T\}\), a bond with face value \(1\) expiring at time \(t+1\) may be purchased at \(1/(1+r^d_{t+1})\) and sold at \(1/(1+r^c_{t+1})\le1/(1+r^d_{t+1})\).
\end{remark}

For convenience, we write
\(R^d\equiv1+r^d\) and \(R^c\equiv1+r^c\), and define the associated credit and deposit \emph{deflator processes} by
\begin{align*}
B^d_t &:=\prod_{s=0}^t \left(1 + r^d_s\right)=\prod_{s=0}^t R^d_s, &
B^c_t &:=\prod_{s=0}^t \left(1 + r^c_s\right)=\prod_{s=0}^t R^c_s
\end{align*}
for $t\in\mathbb{T}$.
Since there is a one-to-one correspondence between descriptions of the bank account in terms of interest rates and deflator process, we adopt the following convention. We reserve \(B\) for the deflator process associated with a positive predictable process \(R\), i.e.\
\[B_t \equiv\prod_{s=0}^t R_s\text{ for }t\in\mathbb{T}.\]
We use these definitions interchangeably, with decorations to indicate the correspondence between individual processes named \(R\) and \(B\).

Different borrowing and lending rates are often treated within the realm of constraints on portfolio holdings. Indeed, it is natural to treat cash holdings as two separate cash accounts, one of which may only have a non-negative balance and accrues interest at the lending rate, and another that may only have a non-positive balance and accrues interest at the borrowing rate. In this regard, consider the contributions of \citeasnoun{fleming_zariphopoulou1991}, \citename{cvitanic_karatzas1992} \citeyear{cvitanic_karatzas1992,cvitanic_karatzas1993}, \citeasnoun{edirisinghe_naik_uppal1993}, \citeasnoun{he_pages1993}, \citeasnoun{browne1995}, \citeasnoun{jouini_kallal1995b}, \citename{karatzas_kou1996} \citeyear{karatzas_kou1996,karatzas_kou1998}, \citeasnoun{cuoco1997}, \citeasnoun{carassus_jouini1998}, \citeasnoun{elkaroui_jeanblancpicque1998}, \citeasnoun[Chapter~6]{karatzas_shreve1998}, \citeasnoun{pham_touzi1999}, \citeasnoun{bensoussan_julien2000}, \citeasnoun{tepla2000}, \citeasnoun{bizid_jouini2001}, \citeasnoun{cvitanic2001}, \citeasnoun{subramanian2001}, \citeasnoun{soner_touzi2002}, \citename{napp2001} \citeyear{napp2001,napp2003}, \citeasnoun{zhang_wang_deng2004}, \citeasnoun{detemple_rindisbacher2005}, \citeasnoun{melnikov_petrachenko2005}, \citeasnoun{rokhlin2005} and others.


\section{Trading and arbitrage}
\label{sec:fundamentals:trading-arbitrage}

Market agents trade in cash and stock. A market agent is allowed to change into any \emph{portfolio} \((\xi,\zeta)\) consisting of \(\xi\) in cash and \(\zeta\) units of stock at any trading date \(t\in\mathbb{T}\): the only (realistic) restrictions are that agents are limited to trades that they can afford, and are not prescient. A market agent seeking to avoid or minimise transaction costs will prefer to aggregate stock trades in order to avoid the simultaneous purchase and sale of stock. Likewise, if the credit interest rate exceeds the deposit rate, then it is not optimal, from the point of view of wealth maximisation, to borrow cash at the same time as having a cash deposit. Motivated by this, and for simplicity of notation, we do not allow either the simultaneous buying and selling of stock or simultaneous long and short positions in cash.

The \emph{liquidation value} \(\vartheta_t\) at time \(t\in\mathbb{T}\) of a portfolio \((\xi,\zeta)\) is defined as
\[\vartheta_t(\xi,\zeta):=\xi + \zeta^+S^b_t - \zeta^-S^a_t.\]
The cost of establishing the portfolio \((\xi,\zeta)\) at time \(t\) from a zero endowment in stock and cash is then
\[\xi + \zeta^+S^a_t - \zeta^-S^b_t= -\left[-\xi + \zeta^-S^b_t - \zeta^+S^a_t \right] =-\vartheta_t(-\xi,-\zeta).\]

A \emph{trading strategy} $(\alpha,\beta)$ is an $\mathbb{R}^2$-valued predictable random process with $\alpha$ denoting the cash holding and $\beta$ denoting the holding in the risky asset.
We assume that $\alpha_0$ is the initial cost of setting up the trading strategy $(\alpha,\beta)$, and we set $\beta_0 := 0$ to indicate that the initial endowment of an investor does not contain any stock. It will sometimes be convenient to assign to each trading strategy $(\alpha,\beta)$ a final portfolio $(\alpha_{T+1},\beta_{T+1})$, into which an investor changes at time \(T\). Unless specified otherwise, we assume that the investor liquidates his/her position in stock at time \(T\), i.e.\ \(\beta_{T+1}=0\) and
\begin{equation}\label{eq:fundamentals:trading-arbitrage:cash-liquidation-alpha}
\alpha_{T+1}=\vartheta_T\left(\varrho_T(\alpha_T),\beta_T\right),
\end{equation}
so that
\[\vartheta_T(\alpha_{T+1},\beta_{T+1})=\vartheta_T\left(\varrho_T(\alpha_T),\beta_T\right).\]

Our interest is in strategies that require no additional injection of funds after time~\(0\). Such strategies may be characterised in terms of the cash flows that they generate. 

\begin{definition}[Self-financing strategy]
A \emph{self-financing trading strategy} is a trading strategy \((\alpha,\beta)\) such that \(\beta_0=0\) and
\begin{equation} \label{eq:fundamentals:trading-arbitrage:self-financing}
\vartheta_t(\varrho_t(\alpha_t)-\alpha_{t+1},\beta_t-\beta_{t+1})\ge0\text{ for }t\in\mathbb{T}.
\end{equation}
\end{definition}

\begin{remark}\label{rem:fundamentals:trading-arbitrage:wedging}
The self-financing condition~\eqref{eq:fundamentals:trading-arbitrage:self-financing} is equivalent to having
\[\alpha_tR + \beta_tx\ge \alpha_{t+1} + \beta_{t+1}x\]
for all \(x\in\left\{S^b_t,S^a_t\right\}\) and \(R\in\left\{R^d_t,R^c_t\right\}\), equivalently, for all \(x\in\left[S^b_t,S^a_t\right]\) and \(R\in\left[R^d_t,R^c_t\right]\).
\end{remark}

Denote the collection of self-financing trading strategies by \(\Phi\). We have the following result.

\begin{proposition} \label{prop:fundamentals:trading-arbitrage:Phi-convex-polyhedral}
The collection \(\Phi\) of self-financing strategies is a convex polyhedral cone in \(\mathbb{R}^{2m+1}\), where \(m\) is the number of nodes in the model.
\end{proposition}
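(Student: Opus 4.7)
The strategy is to exhibit $\Phi$ as the solution set in $\mathbb{R}^{2m+1}$ of a finite system of homogeneous linear inequalities together with a linear equation, from which being a convex polyhedral cone follows immediately. First I would set up coordinates: a trading strategy is determined by the initial scalar $\alpha_0$ together with the node-indexed values of $\alpha_t$ and $\beta_t$, and one verifies that after imposing $\beta_0=0$ and eliminating $(\alpha_{T+1},\beta_{T+1})$ via the liquidation rule~\eqref{eq:fundamentals:trading-arbitrage:cash-liquidation-alpha} the remaining free parameters number exactly $2m+1$, giving the embedding $\Phi\subseteq\mathbb{R}^{2m+1}$.

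Next I would linearise the self-financing condition using Remark~\ref{rem:fundamentals:trading-arbitrage:wedging}. For each $t\in\mathbb{T}$ and each atom of $\Omega_t$, inequality~\eqref{eq:fundamentals:trading-arbitrage:self-financing} is equivalent to the four homogeneous linear inequalities
\[
\alpha_tR+\beta_tx\ \ge\ \alpha_{t+1}+\beta_{t+1}x
\]
indexed by $R\in\{R^d_t,R^c_t\}$ and $x\in\{S^b_t,S^a_t\}$, interpreted node-by-node along successors of the atom in question. Letting $t$ range over $\mathbb{T}$ and the atoms range over $\Omega_t$ yields a finite collection of homogeneous linear inequalities in the chosen coordinates; appending the single linear equation $\beta_0=0$ completes the description of $\Phi$.

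Since $\Phi$ is cut out of $\mathbb{R}^{2m+1}$ by finitely many homogeneous linear inequalities together with a linear equation, it is automatically closed under nonnegative scalar multiples and convex combinations, and by construction it is the intersection of finitely many closed half-spaces through the origin with a linear subspace. This is precisely a convex polyhedral cone. There is no real obstacle: the apparent nonlinearity of $\vartheta_t$ and $\varrho_t$ has already been collapsed to finitely many extremal cases in Remark~\ref{rem:fundamentals:trading-arbitrage:wedging}, so what remains is purely notational. The only point requiring any care is the bookkeeping that produces the ambient dimension $2m+1$, which depends on the paper's indexing convention for predictable processes and on the fixed components $\beta_0=0$ and the liquidation-determined $(\alpha_{T+1},\beta_{T+1})$.
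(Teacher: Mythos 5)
Your core argument is the paper's: use Remark~\ref{rem:fundamentals:trading-arbitrage:wedging} to replace the piecewise-linear self-financing condition by four homogeneous linear inequalities per node, so that \(\Phi\) is the solution set of finitely many homogeneous linear inequalities and is therefore a polyhedral convex cone. That part is fine. The genuine problem is the bookkeeping that you yourself flag as the only delicate point, and it comes out wrong. The ambient dimension \(2m+1\) in the proposition is obtained by \emph{keeping} the final portfolio as free coordinates: the paper identifies a strategy with \(\alpha_0\) together with the portfolio \((\alpha_{t+1}(\mu),\beta_{t+1}(\mu))\) chosen at each node \(\mu\in\Omega_t\) for every \(t\in\mathbb{T}\), including \(t=T\), i.e.\ one pair of coordinates per node (there are \(m\) nodes), constrained at time \(T\) only by the self-financing inequality. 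If instead you impose \(\beta_{T+1}=0\) and eliminate \(\alpha_{T+1}\) via the liquidation rule~\eqref{eq:fundamentals:trading-arbitrage:cash-liquidation-alpha}, the remaining free parameters are \(\alpha_0\) and \((\alpha_{t+1},\beta_{t+1})\) at the non-terminal nodes only, i.e.\ \(2n+1\) of them with \(n\) the number of non-terminal nodes --- exactly the reduction described in the remark following the proposition --- not \(2m+1\). So your parametrisation does not produce the space named in the statement; you prove polyhedrality of a different (lower-dimensional) representation of \(\Phi\).

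Two smaller points. First, the liquidation rule~\eqref{eq:fundamentals:trading-arbitrage:cash-liquidation-alpha} involves \(\beta_T^{\pm}\) and \(\varrho_T(\alpha_T)\), so it is only piecewise linear; your elimination still leaves a polyhedral set (the time-\(T\) constraint becomes an identity after substitution), but it is not the innocuous ``linear equation'' your write-up suggests, and it is cleaner simply not to eliminate, as the paper does. Second, the inequalities for date \(t\) are indexed by the nodes \(\nu\in\Omega_t\) themselves (all quantities appearing in~\eqref{eq:fundamentals:trading-arbitrage:self-financing} at time \(t\) are \(\mathcal{F}_t\)-measurable); phrasing them as conditions ``along successors of the atom'' shifts the indexing by one period and should be corrected, though this is notational rather than substantive.
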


\begin{remark}
The dimension \(2m+1\) in Proposition~\ref{prop:fundamentals:trading-arbitrage:Phi-convex-polyhedral} may be reduced upon making further assumptions on the nature of trading strategies. To see this, fix any strategy \((\alpha,\beta)\in\Phi\). On the one hand, if one assumes that there is no consumption at time \(0\), i.e.
\[\alpha_0=\alpha_1+\beta^+_1S^a_0-\beta^-_1S^b_0,\]
then we may identify \((\alpha,\beta)\) with a vector in \(\mathbb{R}^{2m}\). On the other hand, if one forces stock liquidation without consumption at time \(T\), i.e.\ \(\beta_{T+1}=0\) and \(\alpha_{T+1}\) is given by~\eqref{eq:fundamentals:trading-arbitrage:cash-liquidation-alpha}, then we may identify \((\alpha,\beta)\) with a vector in \(\mathbb{R}^{2n+1}\), where \(n\) is the number of non-terminal nodes in the model.
\end{remark}

The model allows \emph{arbitrage} if it allows an investor to achieve risk-free profit without any initial investment. An arbitrage opportunity is defined as follows.

\begin{definition}[Arbitrage]
An \emph{arbitrage opportunity} is a self-financing strategy \((\alpha,\beta)\in\Phi\) such that
\begin{align*}
\alpha_0&\le0,& \vartheta_T(\alpha_{T+1}, \beta_{T+1})&\ge0, & \mathbb{Q}(\vartheta_T(\alpha_{T+1}, \beta_{T+1})>0)&>0.
\end{align*}
\end{definition}

\begin{remark}
If one allows market agents to buy and sell stocks simultaneously and to have cash loans and deposits at the same time, then it is not strictly necessary to assume explicitly that the ask price exceeds the bid price and the credit interest rate exceeds the deposit rate, as we have done in~\eqref{eq:fundamentals:the-model:stock-no-arbitrage} and~\eqref{eq:fundamentals:the-model:bond-no-arbitrage}. Indeed, these two inequalities would follow directly if one assumes that the model is free of arbitrage.
\end{remark}

In this section, we provide a sufficient condition for the absence of arbitrage. This condition is the existence of an equivalent martingale triple, which we define as follows.

\begin{definition}[Martingale triple]\label{def:fundamentals:trading-arbitrage:equivalent-martingale-triple}
A \emph{martingale triple} $(\mathbb{P},S,R)$ (equivalently, $(\mathbb{P},S,B)$) consists of a probability measure $\mathbb{P}$ on \(\Omega\), an adapted stock price process $S$ and predictable interest process $R$ (equivalently, deflator process \(B\)) satisfying
\begin{align*}
S^b &\le S \le S^a, &
R^d &\le R \le R^c 
\end{align*}
such that $\frac{S}{B}$ is a martingale with respect to $\mathbb{P}$. If in addition the probability measure \(\mathbb{Q}\) is equivalent to \(\mathbb{P}\), then $(\mathbb{P},S,R) \equiv (\mathbb{P},S,B)$ is called an \emph{equivalent martingale triple}.
\end{definition}

Recall that a classical friction-free model with stock price process \(S\) and deflator process \(B\) is free of arbitrage if and only if there exists a probability measure \(\mathbb{P}\) equivalent to \(\mathbb{Q}\) with respect to which \(\frac{S}{B}\) is a martingale \citeaffixed[Theorem~6.1.1]{delbaen_schachermayer2006}{e.g.}. Therefore a martingale triple \((\mathbb{P},S,R)\) can be regarded as an artificial friction-free, arbitrage-free model that is embedded in the model of Section~\ref{sec:fundamentals:the-model} in the sense that \(S\) is between the bid and ask prices of the stock, and \(R\) is between the credit and deposit interest processes.

Let $\overline{\mathcal{P}}$ be the collection of martingale triples, and \(\mathcal{P}\subseteq\overline{\mathcal{P}}\) the collection of equivalent martingale triples.

\begin{lemma} \label{lem:fundamentals:trading-arbitrage:super-martingale-property}
Fix $(\mathbb{P},S,R)\equiv(\mathbb{P},S,B) \in \overline{\mathcal{P}}$, as well as a self-financing strategy $(\alpha,\beta)\in\Phi$. The process \(\left(\frac{1}{B_t}[\alpha_t R_t + \beta_t S_t]\right)_{t\in\mathbb{T}}\) is a super-martingale with respect to $\mathbb{P}$.
\end{lemma}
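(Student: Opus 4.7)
Write $M_t:=\frac{1}{B_t}[\alpha_tR_t+\beta_tS_t]$. Integrability is automatic since $\Omega$ is finite, and $M_t$ is $\mathcal{F}_t$-measurable because $\alpha_t,\beta_t,R_t$ are $\mathcal{F}_{t-1}$-measurable by predictability (and hence so is $B_t=R_tB_{t-1}$), while $S_t$ is adapted. By the tower property, it therefore suffices to establish the one-step inequality $\mathbb{E}_{\mathbb{P}}[M_t\mid\mathcal{F}_{t-1}]\le M_{t-1}$ for every $t\in\{1,\ldots,T\}$.

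The first step is to evaluate the conditional expectation. Since $\alpha_t,\beta_t,R_t,B_t$ are all $\mathcal{F}_{t-1}$-measurable, they factor out, leaving
\[\mathbb{E}_{\mathbb{P}}[M_t\mid\mathcal{F}_{t-1}] = \frac{\alpha_tR_t}{B_t}+\frac{\beta_t}{B_t}\,\mathbb{E}_{\mathbb{P}}[S_t\mid\mathcal{F}_{t-1}].\]
The martingale property of $S/B$ under $\mathbb{P}$, together with $B_t=R_tB_{t-1}$, gives $\mathbb{E}_{\mathbb{P}}[S_t\mid\mathcal{F}_{t-1}]=R_tS_{t-1}$, and substituting collapses the expression to
\[\mathbb{E}_{\mathbb{P}}[M_t\mid\mathcal{F}_{t-1}] = \frac{1}{B_{t-1}}\bigl[\alpha_t+\beta_tS_{t-1}\bigr].\]

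The second step translates this into the supermartingale inequality via the wedging form of self-financing stated in Remark~\ref{rem:fundamentals:trading-arbitrage:wedging}. Because $(\mathbb{P},S,R)\in\overline{\mathcal{P}}$, the values $R_{t-1}\in[R^d_{t-1},R^c_{t-1}]$ and $S_{t-1}\in[S^b_{t-1},S^a_{t-1}]$ are admissible, so taking $R=R_{t-1}$ and $x=S_{t-1}$ in the wedging inequality at the $(t-1)$-step yields
\[\alpha_t+\beta_tS_{t-1}\le\alpha_{t-1}R_{t-1}+\beta_{t-1}S_{t-1}.\]
Dividing through by $B_{t-1}$ gives precisely $\mathbb{E}_{\mathbb{P}}[M_t\mid\mathcal{F}_{t-1}]\le M_{t-1}$, which is the desired one-step inequality.

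I do not expect a real obstacle: the argument is essentially an unwinding of definitions. The only genuinely substantive observation is that predictability of $R$ and $B$ lets them pass through the conditional expectation, and that a martingale triple supplies exactly the values of $R$ and $x$ required to invoke the wedging form of the self-financing condition. The interplay of these two facts reduces the supermartingale property to a one-line computation.
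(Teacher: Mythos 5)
Your proof is correct and is essentially the paper's own argument: both rest on exactly the same two ingredients, the wedging form of self-financing from Remark~\ref{rem:fundamentals:trading-arbitrage:wedging} evaluated at the martingale triple's $R$ and $S$, and the martingale property of $S/B$ combined with predictability of $\alpha,\beta,R,B$. The only difference is cosmetic: the paper starts from $\frac{1}{B_t}[\alpha_tR_t+\beta_tS_t]$, applies the wedging at step $t$ and then recognises the result as $\mathbb{E}_{\mathbb{P}}\bigl(\frac{1}{B_{t+1}}[\alpha_{t+1}R_{t+1}+\beta_{t+1}S_{t+1}]\mid\mathcal{F}_t\bigr)$, whereas you compute $\mathbb{E}_{\mathbb{P}}[M_t\mid\mathcal{F}_{t-1}]$ first and apply the wedging at step $t-1$ last, i.e.\ the same one-step computation read in the opposite direction with the index shifted by one.
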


\begin{proof}
Fix \(t\in\mathbb{T}\backslash\{T\}\). From Remark~\ref{rem:fundamentals:trading-arbitrage:wedging} and the martingale property of \(\frac{S}{B}\), we obtain
\begin{multline*}
\frac{1}{B_t}[\alpha_t R_t + \beta_t S_t]
\ge\frac{1}{B_t}\left[\alpha_{t+1} + \beta_{t+1} S_t\right] =\frac{1}{B_t}\alpha_{t+1}+\beta_{t+1}\mathbb{E}_{\mathbb{P}}\left(\left.\frac{S_{t+1}}{B_{t+1}}\right|\mathcal{F}_t\right)\\
=\mathbb{E}_{\mathbb{P}}\left(\left.\frac{1}{B_{t+1}}[\alpha_{t+1} R_{t+1} + \beta_{t+1} S_{t+1}]\right|\mathcal{F}_t\right).\qedhere
\end{multline*}
\end{proof}

We conclude with the main result of this section.

\begin{proposition} \label{prop:fundamentals:trading-arbitrage:existence-implies-viability}
The existence of an equivalent martingale triple \((\mathbb{P},S,B)\) in \(\mathcal{P}\) implies lack of arbitrage.
\end{proposition}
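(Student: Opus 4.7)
The plan is to argue by contradiction: assume that $(\mathbb{P},S,R)\equiv(\mathbb{P},S,B)\in\mathcal{P}$ exists and that there is an arbitrage $(\alpha,\beta)\in\Phi$, and then derive an impossibility from Lemma~\ref{lem:fundamentals:trading-arbitrage:super-martingale-property}. The central object is the wealth-like process $V_t:=\frac{1}{B_t}\left[\alpha_t R_t+\beta_t S_t\right]$, which the lemma asserts is a $\mathbb{P}$-supermartingale, so that $\mathbb{E}_{\mathbb{P}}[V_T]\le V_0$. Since $\beta_0=0$ and $R_0=B_0=1$ by the conventions $r^d_0=r^c_0=0$, we get $V_0=\alpha_0\le 0$, and the goal is to show that in spite of this, the arbitrage hypothesis forces $\mathbb{E}_{\mathbb{P}}[V_T]>0$.

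The key observation that does the work is that the wedging $S^b\le S\le S^a$ and $R^d\le R\le R^c$ allows one to dominate the genuine liquidation value by the artificial frictionless value. Concretely I would verify the two pointwise inequalities
\[
\alpha_T R_T\ge \alpha_T^+ R^d_T-\alpha_T^- R^c_T=\varrho_T(\alpha_T),\qquad \beta_T S_T\ge \beta_T^+ S^b_T-\beta_T^- S^a_T,
\]
the first by splitting on the sign of $\alpha_T$ (on $\{\alpha_T\ge0\}$ use $R_T\ge R^d_T$, on $\{\alpha_T<0\}$ use $R_T\le R^c_T$ and flip the sign), and the second similarly using $S^b_T\le S_T\le S^a_T$. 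Adding these, and using the liquidation convention $\beta_{T+1}=0$ together with~\eqref{eq:fundamentals:trading-arbitrage:cash-liquidation-alpha}, gives
\[
\alpha_T R_T+\beta_T S_T\;\ge\;\varrho_T(\alpha_T)+\beta_T^+ S^b_T-\beta_T^- S^a_T\;=\;\vartheta_T(\varrho_T(\alpha_T),\beta_T)\;=\;\alpha_{T+1}\;=\;\vartheta_T(\alpha_{T+1},\beta_{T+1}).
\]
Since $B_T>0$ (because each $R_s$ lies between $R^d_s>0$ and $R^c_s$), we conclude $V_T\ge \frac{1}{B_T}\vartheta_T(\alpha_{T+1},\beta_{T+1})\ge0$.

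Now the equivalence of $\mathbb{P}$ and $\mathbb{Q}$ is used: the arbitrage condition $\mathbb{Q}(\vartheta_T(\alpha_{T+1},\beta_{T+1})>0)>0$ gives $\mathbb{P}(V_T>0)>0$, so $\mathbb{E}_{\mathbb{P}}[V_T]>0$, contradicting $\mathbb{E}_{\mathbb{P}}[V_T]\le V_0=\alpha_0\le 0$. The main obstacle, if any, is just keeping track of the one-sided inequalities produced by the wedging: one has to be careful that $R$ dominates $\varrho_T$ in the correct direction (cheaper borrowing vs.\ better lending under the artificial rate), and that $S$ similarly dominates the bid-ask liquidation; everything else is a direct application of the supermartingale lemma and the definition of equivalent measures.
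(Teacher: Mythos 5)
Your proposal is correct and takes essentially the same route as the paper: a proof by contradiction that combines the supermartingale property of $\left(\tfrac{1}{B_t}[\alpha_t R_t+\beta_t S_t]\right)_{t\in\mathbb{T}}$ (Lemma~\ref{lem:fundamentals:trading-arbitrage:super-martingale-property}) with the wedging $S^b\le S\le S^a$, $R^d\le R\le R^c$ and the equivalence of $\mathbb{P}$ and $\mathbb{Q}$ to force $0<\mathbb{E}_{\mathbb{P}}\bigl(\tfrac{1}{B_T}\vartheta_T(\alpha_{T+1},\beta_{T+1})\bigr)\le\alpha_0\le0$. The only cosmetic difference is that you invoke the liquidation convention $\beta_{T+1}=0$, $\alpha_{T+1}=\vartheta_T(\varrho_T(\alpha_T),\beta_T)$ and verify the wedge inequalities at time $T$ by hand, whereas the paper handles a general terminal portfolio via Remark~\ref{rem:fundamentals:trading-arbitrage:wedging}; the chains of inequalities are the same.
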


\begin{proof}
Suppose, by contradiction, that there exists a self-financing strategy $(\alpha,\beta)\in\Phi$ with initial cost $\alpha_0 \le 0$ satisfying
\[\vartheta_T(\alpha_{T+1}, \beta_{T+1})\ge0\] and 
\begin{equation}\label{eq:fundamentals:trading-arbitrage:existence-implies-viability}
\mathbb{Q}(\vartheta_T(\alpha_{T+1}, \beta_{T+1})>0)>0.
\end{equation}
As $\mathbb{P}$ is equivalent to $\mathbb{Q}$, inequality~\eqref{eq:fundamentals:trading-arbitrage:existence-implies-viability} and Remark~\ref{rem:fundamentals:trading-arbitrage:wedging} imply that
\begin{multline*}
0 
< \mathbb{E}_{\mathbb{P}}\left(\frac{1}{B_T}\vartheta_T(\alpha_{T+1}, \beta_{T+1})\right)
=\mathbb{E}_{\mathbb{P}}\left(\frac{1}{B_T}\left[ \alpha_{T+1} + \beta_{T+1}^+S^b_T - \beta_{T+1}^-S^a_T \right]\right)\\
\le\mathbb{E}_{\mathbb{P}}\left(\frac{1}{B_T}\left[ \alpha_{T+1}  + \beta_{T+1}S_T\right]\right)\le\mathbb{E}_{\mathbb{P}}\left(\frac{1}{B_T}[\alpha_T R_T + \beta_T S_T]\right) \le\alpha_0 \le 0,
\end{multline*}
due to Lemma~\ref{lem:fundamentals:trading-arbitrage:super-martingale-property}.
\end{proof}

\section{Fundamental theorem}
\label{sec:fundamentals:ftap}

Loosely speaking, the fundamental theorem of asset pricing states that a
model of a financial market is free of arbitrage if and only if all the assets
in the model can be priced in a consistent way. The fundamental theorem for friction-free models was established by \citeasnoun{harrison_pliska1981} (finite state space) and \citeasnoun{dalang_morton_willinger1990} (infinite state space) for discrete time models (with several alternative proofs appearing subsequently), and for continuous-time models by \citename{delbaen_schachermayer1994} \citeyear{delbaen_schachermayer1994,delbaen_schachermayer1998}.

\citeasnoun[Theorem~3.2]{jouini_kallal1995a} were the first to extend the fundamental theorem to models with proportional transaction costs. With the current definition of arbitrage, \citeasnoun[Theorem 2 and Section 5]{kabanov_stricker2001b} and \citeasnoun[Theorem~3]{ortu2001} established an analogue of the result by \citeasnoun{harrison_pliska1981}, and \citeasnoun[Theorem~3.1]{zhang_xu_deng2002}, \citeasnoun[Theorem~1]{kabanov_rasonyi_stricker2002} and \citeasnoun[Theorem~1.7]{schachermayer2004} did the same for the work of \citeasnoun{dalang_morton_willinger1990}. In the current setting, these results show that a model with proportional transaction costs on the stock (and zero interest on cash) is free of arbitrage if and only if there exists an artificial stock price process \(S\) taking its values between the bid and ask prices of the stock, and a probability measure \(\mathbb{P}\) with bounded density such that \(S\) is a martingale under~\(\mathbb{P}\).

The fundamental theorem of asset pricing for markets with short sale constraints was first established by \citeasnoun[Theorem~2.1]{jouini_kallal1995b}. In their setting, holdings are required to be non-negative in some assets, and non-positive in others, and it therefore includes the case of different borrowing and lending rates. Most of the subsequent progress in establishing the fundamental theorem in models with investment constraints has been in discrete time, especially on extending the work of \citeasnoun{dalang_morton_willinger1990}. \citeasnoun[Theorem~2.4]{schurger1996} considered a model where all asset holdings are required to be non-negative. \citeasnoun[Theorem~4.2]{pham_touzi1999} and \citeasnoun[Lemma~3.1]{napp2003} also considered the case where the amounts invested in assets are constrained to lie in a closed convex cone. \citeasnoun[Theorem~3.1]{carassus_pham_touzi2001} required asset holdings to be in one closed convex set if the wealth of an investor is positive, and in another if negative. \citeasnoun[Theorem~1.1]{evstigneev_schurger_taksar2004} and \citeasnoun[Theorem~2]{rokhlin2005} established the fundamental theorem under random cone constraints. The setting of convex cone constraints on asset holdings contains the case of different borrowing and lending rates. Indeed, \citeasnoun[Corollary~4.1]{napp2003} showed that a model with different borrowing and lending rates (but no proportional transaction costs on the stock \(S\)) is free of arbitrage if and only if there exists an artificial interest rate process \(r\) (and associated deflator \(B\)) taking its values between the deposit and credit rates and a probability measure \(\mathbb{P}\) with bounded density such that \(\frac{S}{B}\) is a martingale under~\(\mathbb{P}\). 

The setting of convex cone constraints on asset holdings does extend to proportional transaction costs, but only in single-period models (\citename{pham_touzi1999}, \citeyear*{pham_touzi1999}, Application~3.1; \citename{napp2003}, \citeyear*{napp2003}, Corollary~4.2) (see also Remark~\ref{rem:fundamentals:the-model:constrains-costs}). \citeasnoun[Theorem~7.2]{dempster_evstigneev_taksar2006}, motivated by the theory of von Neumann-Gale systems, were the first to suggest a remedy to this shortcoming, thus making some headway towards a unified theory of asset pricing for models with constraints and transaction costs. In addition to placing constraints on asset holdings, they also require asset holdings in adjacent periods to jointly satisfy convex cone constraints. Consequently, they obtained the fundamental theorem for a discrete-time discrete-space model with proportional transaction costs (Theorem~9.3) (but no constraints on borrowing and lending). In our proofs below, we adapt their arguments to explicitly take account of different rates for borrowing and lending.

The main result of this paper admits the following simple formulation.

\begin{theorem} \label{th:fundamentals:ftap}
There is no arbitrage in the model if and only if it admits an equivalent martingale triple.
\end{theorem}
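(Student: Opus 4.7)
The ``if'' direction is Proposition~\ref{prop:fundamentals:trading-arbitrage:existence-implies-viability}, so only the converse requires work: assuming the model is arbitrage-free, the task is to exhibit an equivalent martingale triple. The plan is to combine a finite-dimensional separation argument with linear programming duality, as flagged in the abstract.

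First, I would set up the separation. Let
\[K:=\{\vartheta_T(\alpha_{T+1},\beta_{T+1}):(\alpha,\beta)\in\Phi,\ \alpha_0\le0\}\subseteq\mathbb{R}^{\Omega}.\]
By Proposition~\ref{prop:fundamentals:trading-arbitrage:Phi-convex-polyhedral} $\Phi$ is a convex polyhedral cone, and $K$ is its image under a linear map, hence again a convex polyhedral cone. No arbitrage says exactly $K\cap\mathbb{R}^{\Omega}_{+}=\{0\}$, so a classical finite-dimensional separation theorem for polyhedral cones (of Stiemke/Tucker type) yields a strictly positive vector $\pi\in\mathbb{R}^{\Omega}_{++}$ with $\sum_\omega\pi(\omega)x(\omega)\le0$ for every $x\in K$. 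Normalising $\pi$ produces a probability measure $\mathbb{P}$ equivalent to $\mathbb{Q}$.

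Next, I would extract the artificial price and interest-rate processes from the dual of the LP that defines $\Phi$. At this point $\mathbb{P}$ is only a separating functional; the work remaining is, at each non-terminal node $\nu\in\Omega_t$, to produce $S_t(\nu)\in[S^b_t(\nu),S^a_t(\nu)]$ and $R_t(\nu)\in[R^d_t(\nu),R^c_t(\nu)]$ (with the appropriate measurability) so that $S/B$ is a $\mathbb{P}$-martingale. The key is the reformulation in Remark~\ref{rem:fundamentals:trading-arbitrage:wedging}: the self-financing cone is the intersection, over $(R,x)$ ranging in the box $[R^d_t,R^c_t]\times[S^b_t,S^a_t]$, of a family of classical frictionless self-financing cones. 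Applying LP duality node-by-node, the admissible dual variables to the defining inequalities of $\Phi$ correspond precisely to selectors of $(R_t,S_t)$ in this box, and the separating weight $\pi$ from the previous step forces the existence at each $\nu$ of at least one such choice for which the one-step identity
\[\frac{S_t(\nu)}{B_t(\nu)}=\mathbb{E}_{\mathbb{P}}\left(\left.\frac{S_{t+1}}{B_{t+1}}\right|\mathcal{F}_t\right)(\nu)\]
holds. Assembling these choices across all non-terminal nodes yields an adapted $S$ and a predictable $R$ meeting every requirement of Definition~\ref{def:fundamentals:trading-arbitrage:equivalent-martingale-triple}.

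The main obstacle is this second step. In the frictionless case the dual variables directly return a unique martingale measure and stock price, but here the two frictions (bid/ask on the stock, borrow/lend on cash) leave a two-dimensional box of admissible duals at each node, and one must verify that this box has non-empty intersection with the set of one-step martingale identifiers, and that these selections can be stitched together consistently across time. This is where I expect to adapt the von~Neumann--Gale dynamic-programming argument of \citeasnoun{dempster_evstigneev_taksar2006}, treating both frictions jointly through a single separation rather than layering them.
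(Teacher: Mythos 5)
The ``if'' direction and your first (separation) step are fine in outline, but there is a genuine gap in the second step: you fix the measure $\mathbb{P}$ as the \emph{normalised separating functional on terminal liquidation values} and then seek $S$ and $R$ making $S/B$ a martingale under that very $\mathbb{P}$. The separating weight $\pi$ prices time-$T$ \emph{cash}, so it plays the role of the terminal component $Z^B_T$ of a consistent pricing system; the measure of the associated martingale triple is not the normalisation of $\pi\mathbb{Q}$ but $\mathbb{P}=B_T Z^B_T\mathbb{Q}$ (see the proof of Lemma~\ref{lem:fundamentals:ftap:equivalent-pricing}), and the deflator $B_T$ only becomes available after $R$ has been constructed. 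When the relevant rates are non-deterministic these two measures differ, and your step two can then fail outright. Concretely, take a two-period model with no frictions at all ($S^b=S^a=S$, $R^d=R^c=R$), with $R_2$ taking different values on the two time-$1$ nodes and parameters chosen so that the martingale measure $\mathbb{P}'$ of $S/B$ is unique (e.g.\ $R_1=1$, $R_2=2$ after ``up'' and $R_2=1$ after ``down'', $S_0=1$, $S_1\in\{2,\tfrac{1}{2}\}$, $S_2\in\{8,2\}$ after up and $S_2\in\{1,\tfrac{1}{4}\}$ after down). The functional $\pi\propto\frac{1}{B_2}\frac{d\mathbb{P}'}{d\mathbb{Q}}$ is strictly positive and satisfies $\mathbb{E}_{\mathbb{Q}}(\pi x)\le0$ for every $x\in K$, so the separation theorem may legitimately hand you this $\pi$; but its normalisation is not $\mathbb{P}'$ (reweighting by the $\mathcal{F}_1$-measurable $B_2$ changes the time-$0$ branch probabilities), and since $S$ and $R$ have no freedom here, no admissible $(S,R)$ makes $S/B$ a martingale under that measure. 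So the measure cannot be fixed before the price and rate processes; the dual object must be carried through the tree as a state-price process, not as a probability.

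This is exactly why the paper postpones the passage to a measure until the end: Lemma~\ref{lem:fundamentals:ftap:discount-factors} produces a two-component terminal density $Z_T=(Z^B_T,Z^S_T)$; Lemma~\ref{lem:fundamentals:ftap:pricing-system} --- the genuinely hard step --- extends it backwards to a consistent pricing system together with a predictable $R$, by introducing generalised strategies with explicit consumption and applying the Kuhn--Tucker/LP-duality result (Lemma~\ref{lem:fundamentals:ftap:kkt}), the multipliers $W^d,W^c,W^S$ yielding $Z^R=W^d+W^c$ and the intermediate rate via $Z^R_t=R_{t+1}\mathbb{E}_{\mathbb{Q}}(Z^R_{t+1}|\mathcal{F}_t)$; only then does Lemma~\ref{lem:fundamentals:ftap:equivalent-pricing} set $\mathbb{P}:=B_TZ^R_T\mathbb{Q}$ and $S:=Z^S/Z^R$. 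Your proposal leaves precisely this middle step as an expectation (``I expect to adapt\ldots''): the existence of node-wise selectors and their consistent stitching is asserted rather than proved, and as written it is asserted for the wrong measure. A smaller point: $\vartheta_T$ is piecewise linear, not linear, so $K$ is not literally a linear image of $\Phi$; it is still a polyhedral cone (introduce an auxiliary variable $\xi\le\vartheta_T(\alpha_{T+1},\beta_{T+1})$, which is two linear inequalities, and project), but this needs saying.
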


We showed in Proposition~\ref{prop:fundamentals:trading-arbitrage:existence-implies-viability} that the existence of an equivalent martingale triple implies absence of arbitrage. We now prove the converse implication in several steps, by combining classical arguments based on finite-dimensional separation theorems with duality results from linear optimisation. We first show in Lemma~\ref{lem:fundamentals:ftap:discount-factors} that lack of arbitrage implies the existence of a \emph{consistent discount factor}, which may be interpreted as the vector of shadow prices at time~\(T\). We then demonstrate in Lemma~\ref{lem:fundamentals:ftap:pricing-system} how consistent discount factors are used to construct a \emph{consistent pricing system}, i.e.\ a sequence of such shadow prices, one for each trading date. Lemma~\ref{lem:fundamentals:ftap:equivalent-pricing} concludes the proof by showing that the model admits an equivalent martingale triple if and only if it admits a consistent pricing system.

We first have a very familiar result. 

\begin{lemma}\label{lem:fundamentals:ftap:discount-factors}
If the model admits no arbitrage, then there exists an \((0,\infty)^2\)-valued \(\mathcal{F}_T\)-measurable random variable \(Z_T\equiv(Z^B_T,Z^S_T)\) such that
\[\mathbb{E}_{\mathbb{Q}}\left(Z_T\cdot \left(\alpha_{T+1},\beta_{T+1}\right)\right)-\alpha_0\le0\]
for all \((\alpha,\beta)\in\Phi\).
\end{lemma}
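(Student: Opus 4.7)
My approach translates the statement into a separation problem for a polyhedral convex cone and then applies a Stiemke-type lemma. I would introduce
\[
\mathbb{M} := \bigl\{(-\alpha_0,\alpha_{T+1},\beta_{T+1}) : (\alpha,\beta)\in\Phi\bigr\} \subseteq \mathbb{R}\times\mathbb{R}^\Omega\times\mathbb{R}^\Omega,
\]
which is a closed convex polyhedral cone, being the linear image of $\Phi$ (Proposition~\ref{prop:fundamentals:trading-arbitrage:Phi-convex-polyhedral}). The strictly positive $Z_T$ to be constructed is nothing more than a strictly positive separating functional between $\mathbb{M}$ and the non-negative orthant $\mathbb{R}_+^{1+2|\Omega|}$ (with the first coordinate normalised to $1$), re-weighted by the strictly positive probabilities $\mathbb{Q}(\{\omega\})$.

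The crux of the argument is to show that $\mathbb{M}\cap\mathbb{R}_+^{1+2|\Omega|}=\{0\}$. Let $(c,x,y)$ be a non-zero element of this intersection, corresponding to some $(\alpha,\beta)\in\Phi$ with $\alpha_0\le 0$, $\alpha_{T+1}\ge 0$ and $\beta_{T+1}\ge 0$. Since $\beta_{T+1}\ge 0$, the liquidation value simplifies to $\vartheta_T(\alpha_{T+1},\beta_{T+1})=\alpha_{T+1}+\beta_{T+1}S^b_T\ge 0$, which is strictly positive at some state whenever $(\alpha_{T+1},\beta_{T+1})\ne(0,0)$ (using $S^b_T>0$), directly contradicting the no-arbitrage hypothesis. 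Otherwise $\alpha_{T+1}=\beta_{T+1}=0$ and $\alpha_0<0$; I would then superimpose $|\alpha_0|$ units of the pure-cash deposit strategy $\gamma\in\Phi$ defined by $\gamma_0=1$, $\gamma_t=B^d_{t-1}$ for $t=1,\dots,T$, $\gamma_{T+1}=B^d_T$ and stock holdings identically zero, for which self-financing holds at every step as an equality because $\varrho_t(\gamma_t)=\gamma_{t+1}$. Since $\Phi$ is a convex cone, $(\alpha+|\alpha_0|\gamma,\beta)\in\Phi$, has zero initial cost, and terminal liquidation value $|\alpha_0|B^d_T>0$ at every state, once again contradicting the no-arbitrage hypothesis.

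Once $\mathbb{M}\cap\mathbb{R}_+^{1+2|\Omega|}=\{0\}$ is established, a Stiemke-type duality result for closed convex cones supplies $(\lambda_0,\Lambda^B,\Lambda^S)\in(0,\infty)^{1+2|\Omega|}$ with $\lambda_0 c+\Lambda^B\cdot x+\Lambda^S\cdot y\le 0$ for every $(c,x,y)\in\mathbb{M}$. Setting $Z^B_T(\omega):=\Lambda^B_\omega/(\lambda_0\mathbb{Q}(\{\omega\}))$ and $Z^S_T(\omega):=\Lambda^S_\omega/(\lambda_0\mathbb{Q}(\{\omega\}))$, which are both strictly positive because $\mathbb{Q}$ charges every singleton, rewrites this inequality as the desired $\mathbb{E}_\mathbb{Q}(Z_T\cdot(\alpha_{T+1},\beta_{T+1}))-\alpha_0\le 0$ for all $(\alpha,\beta)\in\Phi$. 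The main obstacle I expect is the separation step producing a \emph{strictly} positive normal vector, as the basic hyperplane separation theorem only provides a non-zero non-negative one; in finite dimensions this is overcome either by separating $\mathbb{M}$ from each standard basis vector of $\mathbb{R}^{1+2|\Omega|}$ individually and averaging the resulting functionals with positive weights, or by invoking linear-programming duality applied to the polyhedral description of $\Phi$.
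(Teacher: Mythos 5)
Your proposal is correct and follows essentially the same route as the paper: both pass to the polyhedral image cone of vectors \((-\alpha_0,\alpha_{T+1},\beta_{T+1})\), separate it from the non-negative orthant by a strictly positive linear functional, and rescale by the first coordinate (your \(\lambda_0\), the paper's \(W_0\)) and the weights \(\mathbb{Q}(\{\omega\})\); indeed your verification that this cone meets the orthant only at \(0\) — including the deposit-strategy argument for the case \(\alpha_{T+1}=\beta_{T+1}=0\), \(\alpha_0<0\), which is not an arbitrage under the paper's definition — is more explicit than the paper's one-line assertion that \(\mathcal{A}\cap\mathcal{L}^+=\emptyset\). One caveat: of your two suggested routes to strict positivity, separating \(\mathbb{M}\) from each basis vector \(e_i\) individually and averaging does not work as stated (the functional separating \(\mathbb{M}\) from \(e_i\) may be negative in the other coordinates, so the average need not be strictly positive), whereas the paper's device — strictly separating the closed polyhedral cone from the compact simplex \(\mathcal{L}^+\) spanned by the basis vectors and then using the cone property to force the supremum over \(\mathcal{A}\) to equal \(0\) — is precisely the standard proof of the Stiemke/Kreps--Yan-type result you invoke, and your linear-programming-duality alternative is also sound.
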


We call the random variable \(Z_T\) in Lemma~\ref{lem:fundamentals:ftap:discount-factors} a \emph{consistent discount factor}. We now continue our construction of sequences of shadow prices at the different trading dates of the model. 

\begin{lemma} \label{lem:fundamentals:ftap:pricing-system}
Suppose that the model admits a consistent discount factor. There exists an \((0,\infty)^2\)-valued adapted process \(Z\equiv(Z^B,Z^S)\) such that
\begin{equation}\label{eq:fundamentals:ftap:pricing-system:wedge-1}
S^b_t\le\frac{Z^S_t}{Z^B_t}\le S^a_t\text{ for }t\in\mathbb{T}
\end{equation}
and \(Z^S\) is a martingale under \(\mathbb{Q}\), and a predictable process \(R\) satisfying
\begin{equation}\label{eq:fundamentals:ftap:pricing-system:wedge-2}
0<R^d_t\le R_t\le R^c_t \text{ for } t\in\mathbb{T}
\end{equation}
such that \(BZ^R\) is a martingale under \(\mathbb{Q}\), where $B$ is the deflator process associated with $R$.
\end{lemma}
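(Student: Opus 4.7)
The plan is to realise the desired processes as dual variables of a linear program that describes the self-financing cone $\Phi$. Using Remark~\ref{rem:fundamentals:trading-arbitrage:wedging}, I would first observe that $\Phi$ is cut out, inside the hyperplane $\{\beta_0=0\}$, by the finite family of linear inequalities
\[
\alpha_t(\mu(\nu))\,R+\beta_t(\mu(\nu))\,x\ge \alpha_{t+1}(\nu)+\beta_{t+1}(\nu)\,x,
\]
indexed by $t\in\mathbb{T}$, $\nu\in\Omega_t$ (with parent $\mu(\nu)\in\Omega_{t-1}$), $R\in\{R^d_t(\mu(\nu)),R^c_t(\mu(\nu))\}$ and $x\in\{S^b_t(\nu),S^a_t(\nu)\}$. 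The consistent-discount-factor property of $Z_T$ says precisely that the linear functional $L(\alpha,\beta):=\alpha_0-\mathbb{E}_{\mathbb{Q}}(Z^B_T\alpha_{T+1}+Z^S_T\beta_{T+1})$ is non-negative on $\Phi$, so Farkas' lemma supplies non-negative multipliers $\lambda_{\nu,R,x}$ (together with an absorbed real multiplier for the equality $\beta_0=0$) that exhibit $L$ as the corresponding non-negative combination of the defining inequalities.

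I would then extract the pricing system from this identity by matching coefficients. For each node $\nu\in\Omega_t$ set
\[
\pi_\nu:=\sum_{R,x}\lambda_{\nu,R,x},\qquad \sigma_\nu:=\sum_{R,x}x\,\lambda_{\nu,R,x},
\]
and define $Z^B_t(\nu):=\pi_\nu/\mathbb{Q}(\nu)$, $Z^S_t(\nu):=\sigma_\nu/\mathbb{Q}(\nu)$; for $\mu\in\Omega_{t-1}$ with $t\ge 1$, set $R_t(\mu):=\pi_\mu/\sum_{\nu\in\successors\mu}\pi_\nu$. The coefficient identities for $\alpha_{T+1}(\omega)$ and $\beta_{T+1}(\omega)$ force the terminal values $\pi_\omega=Z^B_T(\omega)\mathbb{Q}(\omega)$ and $\sigma_\omega=Z^S_T(\omega)\mathbb{Q}(\omega)$; the identities for $\beta_t(\mu)$ with $1\le t\le T$ collapse to $\sigma_\mu=\sum_{\nu\in\successors\mu}\sigma_\nu$, i.e.\ the $\mathbb{Q}$-martingale property of $Z^S$; and the identities for $\alpha_t(\mu)$ rearrange, using that $R^d_t(\mu)$ and $R^c_t(\mu)$ are constant on $\mu$ by predictability, into $\pi_\mu=R^d_t(\mu)A+R^c_t(\mu)B$ with $A+B=\sum_{\nu\in\successors\mu}\pi_\nu$, placing $R_t(\mu)$ in $[R^d_t(\mu),R^c_t(\mu)]$ and simultaneously yielding $B_{t-1}Z^B_{t-1}=\mathbb{E}_{\mathbb{Q}}(B_tZ^B_t\mid\mathcal{F}_{t-1})$. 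The wedge $S^b_t\le Z^S_t/Z^B_t\le S^a_t$ comes out for free because $\sigma_\nu/\pi_\nu$ is a convex combination of $S^b_t(\nu)$ and $S^a_t(\nu)$.

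To finish, I would verify strict positivity of $Z^B$ and $Z^S$. The boundary conditions give $\pi_\omega,\sigma_\omega>0$ for $\omega\in\Omega_T$ from $Z^B_T,Z^S_T>0$ and full support of $\mathbb{Q}$; combining this with $R^d_t,R^c_t,S^b_t,S^a_t>0$, backward induction along the flow equations propagates strict positivity to every node. The main obstacle I anticipate is not the verification itself, which once set up is essentially formal, but the careful bookkeeping needed to match the coefficient of each variable $\alpha_t(\mu),\beta_t(\mu),\alpha_{T+1}(\omega),\beta_{T+1}(\omega),\alpha_0$ to the right aggregate of multipliers, confirming in particular that the predictability-induced constancy of $R^d_t,R^c_t$ on $\mu\in\Omega_{t-1}$ is what forces $R_t(\mu)$ to be a single number per parent node, in harmony with the predictability required of $R$.
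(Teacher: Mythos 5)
Your argument is correct, but it takes a genuinely different technical route from the paper's. You dualise the raw self-financing inequalities of Remark~\ref{rem:fundamentals:trading-arbitrage:wedging} directly: the consistent discount factor makes the linear functional \(\alpha_0-\mathbb{E}_{\mathbb{Q}}\left(Z^B_T\alpha_{T+1}+Z^S_T\beta_{T+1}\right)\) nonnegative on the polyhedral cone \(\Phi\), Farkas' lemma produces nonnegative multipliers for the four inequalities at each node, and the node-wise aggregates \(\pi_\nu,\sigma_\nu\) then deliver everything by coefficient matching: the \(\beta_t(\mu)\) identities give the \(\mathbb{Q}\)-martingale property of \(Z^S\), the \(\alpha_t(\mu)\) identities give \(\pi_\mu=R^d_t(\mu)A+R^c_t(\mu)B\) with \(A+B=\sum_{\nu\in\successors\mu}\pi_\nu\), hence both \(R_t\in[R^d_t,R^c_t]\) and the martingale property of \(BZ^B\), and the wedge~\eqref{eq:fundamentals:ftap:pricing-system:wedge-1} is immediate because \(\sigma_\nu/\pi_\nu\) is a convex combination of bid and ask. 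The paper instead adapts Dempster, Evstigneev and Taksar: it enlarges strategies to generalised trading strategies with explicit consumption, shows that the consistent discount factor makes \(0\) optimal for the resulting linear programme, and invokes the Kuhn--Tucker-type Lemma~\ref{lem:fundamentals:ftap:kkt} to obtain multiplier processes \(W^d,W^c,W^S\) attached to the consumption constraints; its flow equation \(W^d_t+W^c_t=R^d_{t+1}\mathbb{E}_{\mathbb{Q}}\left(\left.W^d_{t+1}\right|\mathcal{F}_t\right)+R^c_{t+1}\mathbb{E}_{\mathbb{Q}}\left(\left.W^c_{t+1}\right|\mathcal{F}_t\right)\) plays exactly the role of your identity \(\pi_\mu=R^d_tA+R^c_tB\), while the bid--ask wedge needs the separate conical Lemma~\ref{lem:fundamentals:ftap:conical}. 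Your route is more elementary and self-contained (a single application of Farkas, no consumption variables, no auxiliary lemmas), at the price of heavier bookkeeping; the paper's route keeps the two interest-rate multipliers and the consumption interpretation explicit, which is what connects it to the constrained-markets literature it builds on. Two small points to tidy in your write-up: set \(R_0:=1=R^d_0=R^c_0\) so that~\eqref{eq:fundamentals:ftap:pricing-system:wedge-2} holds at \(t=0\) and the deflator \(B\) is defined from time \(0\), and note that the strict positivity of \(\pi\) (your backward induction, which uses \(R^d_t>0\) and the terminal identification \(\pi_\omega=\mathbb{Q}(\omega)Z^B_T(\omega)>0\)) should be established before \(R_t(\mu)=\pi_\mu/\sum_{\nu\in\successors\mu}\pi_\nu\) is declared well defined; both are exactly as routine as you indicate.
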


We refer to processes such as \(Z\) in Lemma~\ref{lem:fundamentals:ftap:pricing-system} as \emph{consistent pricing $($or valuation$)$ systems}. Our final result, which completes the proof of Theorem~\ref{th:fundamentals:ftap}, extends the arguments of \citeasnoun[Proposition~2.6]{harrison_pliska1981} and \citeasnoun[pp.~24--25]{schachermayer2004} from the friction-free case. Observe that the proof of Lemma~\ref{lem:fundamentals:ftap:equivalent-pricing} may readily be extended to show that there exists a one-to-one correspondence between equivalent martingale triples and consistent pricing systems.

\begin{lemma} \label{lem:fundamentals:ftap:equivalent-pricing}
The model admits an equivalent martingale triple if and only if it admits a consistent pricing system.
\end{lemma}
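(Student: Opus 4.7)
The plan is to exhibit an explicit bijection between consistent pricing systems $(Z^B,Z^S,R)$ and equivalent martingale triples $(\mathbb{P},S,R)$ sharing the same interest process $R$. In both directions the construction is driven by interpreting the deflated process $B Z^B$ (which Lemma~\ref{lem:fundamentals:ftap:pricing-system} guarantees is a strictly positive $\mathbb{Q}$-martingale) as an unnormalised density process for the change of measure from $\mathbb{Q}$ to $\mathbb{P}$.

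For the ``pricing system $\Rightarrow$ martingale triple'' direction, I would start with $(Z^B,Z^S,R)$ as produced by Lemma~\ref{lem:fundamentals:ftap:pricing-system} and recall that since $r^c_0=r^d_0=0$ we have $B_0=1$, while $Z^B_0$ is a positive constant (by $\mathcal{F}_0$-triviality). Define
\[
\frac{d\mathbb{P}}{d\mathbb{Q}} := \frac{B_T Z^B_T}{Z^B_0},\qquad S_t := \frac{Z^S_t}{Z^B_t}\ \text{ for }t\in\mathbb{T}.
\]
Then $d\mathbb{P}/d\mathbb{Q}$ is strictly positive with $\mathbb{Q}$-expectation $1$ by the martingale property of $BZ^B$, so $\mathbb{P}\sim\mathbb{Q}$; the process $S$ is adapted and satisfies $S^b\le S\le S^a$ by~\eqref{eq:fundamentals:ftap:pricing-system:wedge-1}; and $R$ already lies between $R^d$ and $R^c$ by~\eqref{eq:fundamentals:ftap:pricing-system:wedge-2}. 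The only thing to verify is that $S/B$ is a $\mathbb{P}$-martingale. Writing the density process as $D_t=\mathbb{E}_{\mathbb{Q}}(d\mathbb{P}/d\mathbb{Q}\mid\mathcal{F}_t)=B_t Z^B_t/Z^B_0$ and applying Bayes's rule gives
\[
\mathbb{E}_{\mathbb{P}}\!\left(\left.\tfrac{S_{t+1}}{B_{t+1}}\right|\mathcal{F}_t\right)
= \frac{1}{D_t}\,\mathbb{E}_{\mathbb{Q}}\!\left(D_{t+1}\tfrac{S_{t+1}}{B_{t+1}}\,\big|\,\mathcal{F}_t\right)
= \frac{1}{B_tZ^B_t}\,\mathbb{E}_{\mathbb{Q}}\!\left(Z^S_{t+1}\mid\mathcal{F}_t\right),
\]
which reduces to $Z^S_t/(B_t Z^B_t)=S_t/B_t$ by the $\mathbb{Q}$-martingale property of $Z^S$.

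For the converse, given $(\mathbb{P},S,R)\in\mathcal{P}$ with associated deflator $B$, let $D_t:=\mathbb{E}_{\mathbb{Q}}(d\mathbb{P}/d\mathbb{Q}\mid\mathcal{F}_t)$ and set
\[
Z^B_t := \frac{D_t}{B_t},\qquad Z^S_t := \frac{D_t\,S_t}{B_t}.
\]
These are strictly positive and adapted since $\mathbb{P}\sim\mathbb{Q}$ forces $D>0$, and $S\ge S^b>0$. By construction $Z^S_t/Z^B_t = S_t \in [S^b_t,S^a_t]$, while $B_t Z^B_t = D_t$ is a $\mathbb{Q}$-martingale by definition. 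That $Z^S$ is a $\mathbb{Q}$-martingale follows from another Bayes computation: $\mathbb{E}_{\mathbb{Q}}(Z^S_{t+1}\mid\mathcal{F}_t)=\mathbb{E}_{\mathbb{Q}}(D_{t+1}S_{t+1}/B_{t+1}\mid\mathcal{F}_t)=D_t\,\mathbb{E}_{\mathbb{P}}(S_{t+1}/B_{t+1}\mid\mathcal{F}_t)=D_t\,S_t/B_t=Z^S_t$.

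This lemma, combined with Lemmas~\ref{lem:fundamentals:ftap:discount-factors} and~\ref{lem:fundamentals:ftap:pricing-system}, closes the loop and finishes the converse direction of Theorem~\ref{th:fundamentals:ftap}. There is no substantive obstacle: the argument is essentially two applications of the abstract Bayes rule plus a bookkeeping check that the wedge conditions and positivity transfer across the change of variables. The mildly delicate point is the normalisation $Z^B_0$ in the density, which is needed because consistent pricing systems are only determined up to a positive $\mathcal{F}_0$-constant, whereas martingale measures must integrate to one; the remark preceding the lemma about a one-to-one correspondence is the version where that constant has been fixed.
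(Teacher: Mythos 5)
Your proposal is correct and follows essentially the same route as the paper's proof: the measure change with density proportional to $B_TZ^B_T$ together with $S=Z^S/Z^B$ in one direction, and $Z^B=D/B$, $Z^S=SD/B$ (which the paper writes equivalently via the backward recursion $Z^R_t=R_{t+1}\mathbb{E}_{\mathbb{Q}}(Z^R_{t+1}|\mathcal{F}_t)$) in the other, with Bayes-rule computations verifying the martingale properties. Your explicit normalisation by $Z^B_0$ is a mild refinement the paper omits (its construction in Lemma~\ref{lem:fundamentals:ftap:pricing-system} happens to give $Z^R_0=1$), but it does not change the argument.
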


\begin{proof}
On the one hand, suppose that \(Z\equiv\left(Z^R,Z^S\right)\) is a consistent pricing system with associated predictable process \(R\) (and deflator \(B\)). Define now a probability measure \(\mathbb{P}\) equivalent to \(\mathbb{Q}\) by 
\[\mathbb{P}(\omega):=B_T(\omega)Z^R_T(\omega)\mathbb{Q}(\omega) \text{ for }\omega\in\Omega.\]
Also set
\[S_t:=\frac{Z^S_t}{Z^R_t} \text{ for }t\in\mathbb{T}.\]
Recalling that \(BZ^R\) and \(Z^S\) are martingales under \(\mathbb{Q}\), the tower property of conditional expectation, the predictability of \(B\) and our construction give
\begin{align*}
\mathbb{E}_{\mathbb{P}}\left(\left.S_{t+1}\right|\mathcal{F}_t\right)
&=\frac{\mathbb{E}_{\mathbb{Q}}\left(\left.S_{t+1}\frac{d\mathbb{P}}{d\mathbb{Q}}\right|\mathcal{F}_t\right)}{\mathbb{E}_{\mathbb{Q}}\left(\left.\frac{d\mathbb{P}}{d\mathbb{Q}}\right|\mathcal{F}_t\right)}\\
&=\frac{\mathbb{E}_{\mathbb{Q}}\left(\left.S_{t+1}B_TZ^R_T\right|\mathcal{F}_t\right)}{\mathbb{E}_{\mathbb{Q}}\left(\left.B_TZ^R_T\right|\mathcal{F}_t\right)}\\
&=\frac{\mathbb{E}_{\mathbb{Q}}\left(\left.S_{t+1}B_{t+1}Z^R_{t+1}\right|\mathcal{F}_t\right)}{B_tZ^R_t}\\
&=R_{t+1}\frac{\mathbb{E}_{\mathbb{Q}}\left(\left.Z^S_{t+1}\right|\mathcal{F}_t\right)}{Z^R_t}=R_{t+1}\frac{Z^S_t}{Z^R_t}=R_{t+1}S_t
\end{align*}
for \(t\in\mathbb{T}\backslash\{T\}\). The fact that \((\mathbb{P},S,R)\equiv(\mathbb{P},S,B)\in\mathcal{P}\) then follows from inequalities~\eqref{eq:fundamentals:ftap:pricing-system:wedge-1}--\eqref{eq:fundamentals:ftap:pricing-system:wedge-2}.

On the other hand, suppose that \((\mathbb{P},S,R)\equiv(\mathbb{P},S,B)\in\mathcal{P}\). Let
\begin{align*}
Z^R_T&:=\frac{1}{B_T}\frac{d\mathbb{P}}{d\mathbb{Q}}, &
Z^R_t&:=R_{t+1}\mathbb{E}_{\mathbb{Q}}\left(\left.Z^R_{t+1}\right|\mathcal{F}_t\right)
\end{align*}
for \(t\in\mathbb{T}\backslash\{T\}\) and
\[Z^S_t:=S_tZ^R_t \text{ for } t\in\mathbb{T}.\]
Inequality~\eqref{eq:fundamentals:ftap:pricing-system:wedge-1} and the martingale property of \(BZ^R\) under \(\mathbb{Q}\) clearly hold true; it remains for us to show that \(Z^S\) is a martingale with respect to \(\mathbb{Q}\). To this end, for \(t\in\mathbb{T}\backslash\{T\}\) we have
\begin{align*}
\mathbb{E}_{\mathbb{Q}}\left(\left.Z^S_{t+1}\right|\mathcal{F}_t\right)
&=\mathbb{E}_{\mathbb{Q}}\left(\left.S_{t+1}Z^R_{t+1}\right|\mathcal{F}_t\right)\\
&=\mathbb{E}_{\mathbb{Q}}\left(\left.\frac{S_{t+1}}{B_{t+1}}B_T Z^R_T\right|\mathcal{F}_t\right)\\
&=\mathbb{E}_{\mathbb{Q}}\left(\left.\frac{S_{t+1}}{B_{t+1}}\frac{d\mathbb{P}}{d\mathbb{Q}}\right|\mathcal{F}_t\right)\\
&=\mathbb{E}_{\mathbb{P}}\left(\left.\frac{S_{t+1}}{B_{t+1}}\right|\mathcal{F}_t\right)\mathbb{E}_{\mathbb{Q}}\left(\left.\frac{d\mathbb{P}}{d\mathbb{Q}}\right|\mathcal{F}_t\right)\\
&=\frac{S_t}{B_t}\mathbb{E}_{\mathbb{Q}}\left(\left.B_T Z^R_T\right|\mathcal{F}_t\right) =S_tZ^R_t=Z^S_t.\qedhere
\end{align*}
\end{proof}

\appendix
\section{Proofs of auxiliary results}

\begin{varthm}[Proposition \ref{prop:fundamentals:trading-arbitrage:Phi-convex-polyhedral}]
The collection \(\Phi\) of self-financing strategies is a convex polyhedral cone in \(\mathbb{R}^{2m+1}\), where \(m\) is the number of nodes in the model.
\end{varthm}

\begin{proof}
Since any strategy \((\alpha,\beta)\) is predictable with respect to the filtration \((\mathcal{F}_t)_{t\in\mathbb{T}}\), it follows that it is completely determined by its initial value \(\alpha_0\) and, for \(t\in\mathbb{T}\), the collection of portfolios \((\alpha_{t+1}(\mu),\beta_{t+1}(\mu))\) created at all nodes \(\mu\in\Omega_t\). Consequently, we may represent \((\alpha,\beta)\) by a vector in \(\mathrm{R}^{2m+1}\).

In view of Remark~\ref{rem:fundamentals:trading-arbitrage:wedging}, such a strategy \((\alpha,\beta)\in\mathbb{R}^{2m+1}\) is self-financing if and only if for \(t\in\mathbb{T}\backslash\{T\}\) it satisfies the linear system of inequalities
\begin{equation}
\left.\qquad\begin{aligned}
\alpha_tR^c_t-\alpha_{t+1} + \left[\beta_t - \beta_{t+1}\right]S^a_t&\ge0,\\
\alpha_tR^c_t-\alpha_{t+1} + \left[\beta_t - \beta_{t+1}\right]S^b_t&\ge0,\\
\alpha_tR^d_t-\alpha_{t+1} + \left[\beta_t - \beta_{t+1}\right]S^a_t&\ge0,\\
\alpha_tR^d_t-\alpha_{t+1} + \left[\beta_t - \beta_{t+1}\right]S^b_t&\ge0.
\end{aligned}\qquad\right\}\label{eq:fundamentals:trading-arbitrage:Phi-convex-polyhedral}
\end{equation}
It follows that \(\Phi\) is polyhedral when viewed as a subset of \(\mathbb{R}^{2m+1}\).

Applying inequalities~\eqref{eq:fundamentals:trading-arbitrage:Phi-convex-polyhedral}, one may readily verify that \(\Phi\) is non-empty (it contains the zero vector) and that
\begin{align*}
(\alpha+\alpha',\beta+\beta')&\in\Phi\text{ for all }(\alpha,\beta),(\alpha',\beta')\in\Phi,\\
(\lambda\alpha,\lambda\beta)&\in\Phi\text{ for all }(\alpha,\beta)\in\Phi,\lambda\in[0,\infty).
\end{align*}
Thus the set \(\Phi\) is self-financing strategies is a convex cone.
\end{proof}

\begin{varthm}[Lemma \ref{lem:fundamentals:ftap:discount-factors}]
If the model admits no arbitrage, then there exists an \((0,\infty)^2\)-valued \(\mathcal{F}_T\)-measurable random variable \(Z_T\equiv(Z^B_T,Z^S_T)\) such that
\[\mathbb{E}_{\mathbb{Q}}\left(Z_T\cdot \left(\alpha_{T+1},\beta_{T+1}\right)\right)-\alpha_0\le0\]
for all \((\alpha,\beta)\in\Phi\).
\end{varthm}

\begin{proof}
Let
\[\mathcal{A}:=\left\{(-\alpha_0,\alpha_{T+1},\beta_{T+1})|(\alpha,\beta)\in\Phi\right\}.\]
As \(\Phi\) is a convex polyhedral cone (Proposition~\ref{prop:fundamentals:trading-arbitrage:Phi-convex-polyhedral}), it follows that \(\mathcal{A}\) may be represented by a convex polyhedral cone in \(\mathbb{R}^{2n+1}\), where \(n\) is the number of terminal nodes in the model \cite[Theorem 19.3]{rockafellar1997}.
Also define 
\[\mathcal{L}^+:=\conv\left[\left\{\left.\left(0,\mathbf{1}_{\{\omega\}},0\right)\right|\omega\in\Omega\right\}\cup\left\{\left.\left(0,0,\mathbf{1}_{\{\omega\}}\right)\right|\omega\in\Omega\right\}\cup(1,0,0)\right].\]
Here \(\mathbf{1}_{\mathcal{S}}\) denotes the indicator function of a set \(\mathcal{S}\), i.e.\ it is equal to \(1\) on \(\mathcal{S}\), and to zero everywhere else. If one views the representation of the convex set \(\mathcal{L}^+\) in Euclidean space, then its generators correspond to the canonical basis of \(\mathbb{R}^{2n+1}\). It is closed and moreover bounded, since \(\mathcal{L}^+\subset[0,1]^{2n+1}\).

The no-arbitrage condition implies that
\(\mathcal{A}\cap\mathcal{L}^+=\emptyset.\)
Consequently, by the finite dimensional separation theorem \citeaffixed[Theorem~11.1 and Corollary~11.14.2]{rockafellar1997}{cf.}, there exists an \(\mathcal{F}_T\)-measurable random variable \(W_T\equiv\left(W^R_T, W^S_T\right)\) and a number \(W_0\in\mathbb{R}\) such that
\begin{alignat}{2}
\lefteqn{\sup_{(\alpha,\beta)\in\Phi}\left[\mathbb{E}_{\mathbb{Q}}\left(W_T\cdot \left(\alpha_{T+1},\beta_{T+1}\right)\right)-W_0\alpha_0\right]}\nonumber\\
&\qquad\qquad\qquad\qquad& &=\sup_{(a_0,a_R,a_S)\in\mathcal{A}}\left[\mathbb{E}_{\mathbb{Q}}\left(W_T\cdot \left(a_R,a_S\right)\right)+W_0a_0\right]\nonumber\\
&& &<\inf_{(l_0,l_R,l_S)\in\mathcal{L}^+}\left[\mathbb{E}_{\mathbb{Q}}\left(W_T\cdot \left(l_R,l_S\right)\right)+W_0l_0\right]<\infty.\label{eq:fundamentals:ftap:discount-factors:separation}
\end{alignat}
Since \(0\in\mathcal{A}\), we must for all \((l_0,l_R,l_S)\in\mathcal{L}^+\) have
\begin{equation}\label{eq:fundamentals:ftap:discount-factors:non-negativity}
0 < \mathbb{E}_{\mathbb{Q}}\left(W_T\cdot \left(l_R,l_S\right)\right)+W_0l_0.
\end{equation}
Substituting each of the elements of the generating set of \(\mathcal{L}^+\) for \((l_0,l_R,l_S)\) into~\eqref{eq:fundamentals:ftap:discount-factors:non-negativity}, we deduce that \(W_0\), \(W^R_T\) and \(W^S_T\) are all strictly positive. Since \(\mathcal{A}\) is a cone, it readily follows from~\eqref{eq:fundamentals:ftap:discount-factors:separation} that in fact
\[\sup_{(\alpha,\beta)\in\Phi}\left[\mathbb{E}_{\mathbb{Q}}\left(W_T\cdot \left(\alpha_{T+1},\beta_{T+1}\right)\right)-W_0\alpha_0\right]=0.\]
Dividing by \(W_0>0\) completes the proof.
\end{proof}

\begin{lemma} \label{lem:fundamentals:ftap:conical}
For \(t\in\mathbb{T}\), if an \(\mathcal{F}_t\)-measurable random variable \(Z\equiv\left(Z^B,Z^S\right)\) has the property that
\begin{equation}\label{eq:fundamentals:ftap:conical:1}
\mathbb{E}_{\mathbb{Q}}\left[Z\cdot\left(\alpha,\beta\right)\right]\ge0
\end{equation}
for all \(\mathcal{F}_t\)-measurable random variables \(\alpha\) and \(\beta\) such that
\begin{equation}\label{eq:fundamentals:ftap:conical:2}
\vartheta_t(\alpha,\beta)\ge0,
\end{equation}
then \(Z^R\) and \(Z^S\) are non-negative and
\[Z^RS^b_t\le Z^S\le Z^RS^a_t.\]
\end{lemma}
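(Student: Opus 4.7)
The plan is to extract the three claimed inequalities by testing the hypothesis $\mathbb{E}_{\mathbb{Q}}[Z^B\alpha+Z^S\beta]\ge 0$ against a small family of $\mathcal{F}_t$-measurable pairs $(\alpha,\beta)$, each supported on a single atom of $\mathcal{F}_t$. Because $\mathcal{F}_t$-measurable random variables are constant on atoms, and because $\mathbb{Q}(\nu)>0$ for every atom $\nu\in\Omega_t$ by the standing assumption on~$\mathbb{Q}$, proving the pointwise inequalities atom by atom is enough, and each test expectation collapses to a single nonzero multiple of the value at~$\nu$.

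Fix then $\nu\in\Omega_t$. First I would substitute $(\alpha,\beta)=(\mathbf{1}_\nu,0)$, which trivially has $\vartheta_t(\alpha,\beta)=\mathbf{1}_\nu\ge 0$, to obtain $Z^B(\nu)\ge 0$. Next, for the upper wedge inequality $Z^S\le Z^BS^a_t$, I would test with $(\alpha,\beta)=(S^a_t\mathbf{1}_\nu,-\mathbf{1}_\nu)$; here $\beta^-=\mathbf{1}_\nu$ and $\beta^+=0$, so
\[
\vartheta_t(\alpha,\beta)=S^a_t\mathbf{1}_\nu-\mathbf{1}_\nu\,S^a_t=0\ge 0,
\]
and the hypothesis yields $\mathbb{Q}(\nu)\bigl[Z^B(\nu)S^a_t(\nu)-Z^S(\nu)\bigr]\ge 0$, whence the bound. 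Symmetrically, the lower wedge $Z^S\ge Z^BS^b_t$ comes from testing $(\alpha,\beta)=(-S^b_t\mathbf{1}_\nu,\mathbf{1}_\nu)$, another pair with $\vartheta_t=0$. The remaining claim, $Z^S\ge 0$, is then automatic: combining the lower wedge just obtained with $Z^B\ge 0$ and the standing assumption $S^b_t>0$ from~\eqref{eq:fundamentals:the-model:stock-no-arbitrage} gives $Z^S\ge Z^BS^b_t\ge 0$.

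I do not anticipate any real obstacle. The lemma is essentially a componentwise description of the dual cone of the polyhedral set $\{(\alpha,\beta):\vartheta_t(\alpha,\beta)\ge 0\}$, and the three test vectors $(\mathbf{1}_\nu,0)$, $(S^a_t\mathbf{1}_\nu,-\mathbf{1}_\nu)$, $(-S^b_t\mathbf{1}_\nu,\mathbf{1}_\nu)$ are precisely the dual generators that exhibit it. The only point worth recording is that $Z^B$ and $Z^S$, being $\mathcal{F}_t$-measurable, are constant on each atom $\nu$, so the atom-by-atom bounds assemble directly into pointwise inequalities between $\mathcal{F}_t$-measurable random variables.
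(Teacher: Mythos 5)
Your proof is correct and follows essentially the same route as the paper: both arguments localise to a single atom $\nu\in\Omega_t$ (using $\mathbb{Q}(\nu)>0$ and constancy of $\mathcal{F}_t$-measurable variables on atoms) and test the hypothesis against the generators of the polyhedral cone $\{(\gamma,\delta):\vartheta_t(\gamma,\delta)(\nu)\ge0\}$, which the paper writes out explicitly as $\conv\cone\{(1,0),(0,1),(S^a_t(\nu),-1),(-S^b_t(\nu),1)\}$. The only (harmless) difference is that the paper obtains $Z^S\ge0$ by testing the generator $(0,1)$ directly, whereas you deduce it from the lower wedge inequality together with $Z^B\ge0$ and $S^b_t>0$.
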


\begin{proof}
For any \(\nu\in\Omega_t\) we have
\begin{align}
\mathcal{K}(\nu)
&:=\left\{\left.(\gamma,\delta)\in\mathbb{R}^2\right|\vartheta_t(\gamma,\delta)(\nu)\ge0\right\}\nonumber\\
&=\left\{\left.(\gamma,\delta)\in\mathbb{R}^2\right|\gamma+\delta S^b_t(\nu)\ge0, \gamma+\delta S^a_t(\nu)\ge0\right\}\nonumber\\
&=\conv\cone\left\{\left(1,0\right),\left(0,1\right),\left(S^a_t(\nu),-1\right),\left(-S^b_t(\nu),1\right)\right\}.\label{eq:fundamentals:ftap:conical:3}
\end{align}
The set \(\mathcal{K}(\nu)\) is clearly a polyhedral convex cone in \(\mathbb{R}^2\).

The set \(\mathcal{A}\) of all \(\mathcal{F}_t\)-measurable random variables satisfying~\eqref{eq:fundamentals:ftap:conical:2} can now be written as
\[\mathcal{A}=\left\{(\alpha,\beta)\left|(\alpha,\beta)\text{ is } \mathcal{F}_t\text{-measurable and }\left(\alpha(\nu),\beta(\nu)\right)\in\mathcal{K}(\nu)\text{ for }\nu\in\Omega_t\right.\right\}.\]
The set \(\mathcal{A}\) can be represented by a polyhedral convex cone in \(\mathbb{R}^{2n}\), where \(n\) is the number of nodes at time \(t\). The set \(\mathcal{A}\) also has the property that \(\left(\alpha\mathbf{1}_{\nu},\beta\mathbf{1}_{\nu}\right)\in\mathcal{A}\) for all \(\nu\in\Omega_t\), whenever \((\alpha,\beta)\in\mathcal{A}\). Inequality~\eqref{eq:fundamentals:ftap:conical:1} therefore implies that
\[\mathbb{Q}(\nu)\left[Z(\nu)\cdot(\gamma,\delta)\right]\ge0,\]
i.e.
\[Z(\nu)\cdot(\gamma,\delta)\ge0\]
for all \(\nu\in\Omega_t\) and \((\gamma,\delta)\in\mathcal{K}(\nu)\). By virtue of the representation~\eqref{eq:fundamentals:ftap:conical:3}, it follows that $Z^R(\nu)\ge0$, $Z^S(\nu)\ge0$ and
\begin{align*}
S^a_t(\nu)Z^R(\nu) - Z^S(\nu) &\ge0, &
-S^b_t(\nu)Z^R(\nu) + Z^S(\nu) &\ge0
\end{align*}
for \(\nu\in\Omega_t\).
\end{proof}

\begin{lemma} \label{lem:fundamentals:ftap:kkt}
Suppose that \(\mathcal{A}\subseteq\mathbb{R}^n\) and \(\mathcal{C}\subseteq\mathbb{R}^k\) are polyhedral convex sets, that \(\mathcal{C}\) is a cone, and that \(L:\mathbb{R}^n\rightarrow\mathbb{R}\) and \(G:\mathbb{R}^n\rightarrow\mathbb{R}^k\) are affine functions. If there exists an element \(z\) of \(\mathcal{A}\) that satisfies \(G(z)\in\mathcal{C}\) and \(L(z)\ge L(x)\) for all \(x\in\mathcal{A}\) satisfying
\(G(x)\in\mathcal{C},\)
then there exists a linear functional \(W:\mathbb{R}^k \rightarrow\mathbb{R}\) such that \(W(c)\ge0\) for all \(c\in\mathcal{C}\),
\[L(z)+W(G(z)) \ge L(x) + W(G(x)) \text{ for all }x\in\mathcal{A}\]
and
\[W(G(z)) = 0.\]
\end{lemma}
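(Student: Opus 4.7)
\textbf{Proof plan for Lemma~\ref{lem:fundamentals:ftap:kkt}.} The plan is to recast the optimality of \(z\) as a linear program and invoke strong LP duality, which holds unconditionally in the polyhedral setting and removes any need for a constraint qualification.

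First I would exploit polyhedrality to write \(\mathcal{A}=\{x\in\mathbb{R}^n:A_1x\le a_1\}\) and \(\mathcal{C}=\{c\in\mathbb{R}^k:Mc\ge0\}\) for suitable matrices \(A_1,M\) and vector \(a_1\), and expand the affine maps as \(L(x)=\ell^\top x+\ell_0\) and \(G(x)=G_0x+g_0\). The hypothesis then says that \(z\) is an optimal solution of the linear program
\[\max\ \ell^\top x\quad\text{subject to}\quad A_1x\le a_1,\ -MG_0x\le Mg_0.\]
This LP is feasible (at \(x=z\)) and has finite optimal value, so strong LP duality (a corollary of the Farkas lemma; cf.\ \citeasnoun[\S29]{rockafellar1997}) provides non-negative dual multipliers \(\lambda,\mu\) satisfying the stationarity identity \(A_1^\top\lambda-G_0^\top M^\top\mu=\ell\) and the complementary slackness relations \(\lambda^\top(a_1-A_1z)=0\) and \(\mu^\top MG(z)=0\).

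Next I would set \(W(c):=\mu^\top Mc\), a linear functional on \(\mathbb{R}^k\). Non-negativity of \(\mu\) together with \(Mc\ge0\) for \(c\in\mathcal{C}\) immediately forces \(W(c)\ge0\) on \(\mathcal{C}\), and the second complementary slackness relation is exactly \(W(G(z))=0\). For the Lagrangian inequality, substituting the stationarity identity rewrites \(L(x)+W(G(x))\) as \(\lambda^\top A_1x\) plus a constant; for \(x\in\mathcal{A}\) this is bounded above by \(\lambda^\top a_1\) plus the same constant, and the first complementary slackness relation renders the bound tight at \(x=z\).

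The only real subtlety is verifying that strong LP duality applies at all, and this is automatic here because both \(\mathcal{A}\) and \(\mathcal{C}\) are polyhedral, so there is no substantive obstacle. If one prefers a self-contained separation argument in the spirit of the proof of Lemma~\ref{lem:fundamentals:ftap:discount-factors}, an alternative is to apply the finite-dimensional separation theorem to the polyhedral set \(\{(L(x)-L(z),G(x)):x\in\mathcal{A}\}-\{0\}\times\mathcal{C}\subset\mathbb{R}\times\mathbb{R}^k\) and the open ray \((0,\infty)\times\{0\}\); disjointness follows from the optimality of \(z\), and the components of the separating functional, with the sign on the \(L\)-component fixed by polyhedrality rather than a Slater-type condition, yield \(W\) together with all three required properties.
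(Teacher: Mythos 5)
Your argument is correct, but it takes a different route from the paper in one important sense: the paper does not prove Lemma~\ref{lem:fundamentals:ftap:kkt} at all, it simply cites standard Lagrange multiplier results for convex programs with polyhedral data (Luenberger, Theorem~8.3.1, and Rockafellar, Corollary~28.3.1), whereas you supply a self-contained derivation. Your reduction is sound: writing \(\mathcal{A}=\{x:A_1x\le a_1\}\), \(\mathcal{C}=\{c:Mc\ge0\}\) (legitimate by Minkowski--Weyl for a polyhedral cone), and \(G(x)=G_0x+g_0\) turns the hypothesis into optimality of \(z\) for the LP \(\max\,\ell^{\top}x\) subject to \(A_1x\le a_1\), \(-MG_0x\le Mg_0\); strong LP duality then yields \(\lambda,\mu\ge0\) with \(A_1^{\top}\lambda-G_0^{\top}M^{\top}\mu=\ell\) and the two complementary slackness identities, and the verification that \(W(c):=\mu^{\top}Mc\) is non-negative on \(\mathcal{C}\), vanishes at \(G(z)\), and makes the Lagrangian \(L+W\circ G\) maximal over \(\mathcal{A}\) at \(z\) (via \(L(x)+W(G(x))=\lambda^{\top}A_1x+\text{const}\le\lambda^{\top}a_1+\text{const}\), with equality at \(z\)) is exactly right. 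You also correctly identify the one point where care is needed -- that no Slater-type condition is required -- and correctly attribute this to polyhedrality; this is precisely what the cited Rockafellar corollary encapsulates. What your approach buys is transparency and self-containedness (everything reduces to Farkas/LP duality, in the same finite-dimensional spirit as Lemma~\ref{lem:fundamentals:ftap:discount-factors}); what the paper's citation buys is brevity. Your alternative separation sketch is plausible but stated loosely -- making the weight on the \(L\)-component strictly positive is the crux, and you would need to spell out the polyhedral separation argument that delivers it -- so the LP duality route should be regarded as your actual proof.
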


\begin{proof}
\citeasnoun[Theorem~8.3.1]{luenberger1969} and \citeasnoun[Corollary 28.3.1]{rockafellar1997}, among others.
\end{proof}

\begin{varthm}[Lemma \ref{lem:fundamentals:ftap:pricing-system}]
Suppose that the model admits a consistent discount factor. There exists an \((0,\infty)^2\)-valued adapted process \(Z\equiv(Z^B,Z^S)\) such that
\begin{equation}\tag{\ref{eq:fundamentals:ftap:pricing-system:wedge-1}}
S^b_t\le\frac{Z^S_t}{Z^B_t}\le S^a_t\text{ for }t\in\mathbb{T}
\end{equation}
and \(Z^S\) is a martingale under \(\mathbb{Q}\), and a predictable process \(R\) satisfying
\begin{equation}\tag{\ref{eq:fundamentals:ftap:pricing-system:wedge-2}}
R^d_t\le R_t\le R^c_t \text{ for } t\in\mathbb{T}
\end{equation}
such that \(BZ^R\) is a martingale under \(\mathbb{Q}\), where $B$ is the deflator process associated with $R$.
\end{varthm}

\begin{proof}
We adapt and extend the argument of \citeasnoun[Theorem~7.1]{dempster_evstigneev_taksar2006} directly. We say that a quadruple \(\left(\alpha,\beta,\gamma,\delta\right)\) is a \emph{generalised trading strategy} if \((\alpha,\beta)\) is a trading strategy, the real-valued processes \(\gamma\) and \(\delta\) are adapted and for all \(t\in\mathbb{T}\) we have
\begin{align*}
\varrho(\alpha_t)&\ge\gamma_t,&
\beta_t&\ge\delta_t, &\vartheta_t\left(\gamma_t-\alpha_{t+1},\delta_t-\beta_{t+1}\right)\ge0.\end{align*}
Denote the collection of generalised trading strategies by \(\Theta\).

A generalised trading strategy is simply a self-financing trading strategy with explicit consumption. At any trading date \(t\in\mathbb{T}\), the owner of the generalised trading strategy \(\left(\alpha,\beta,\gamma,\delta\right)\in\Theta\) starts trading with his/her current portfolio of \(\varrho(\alpha_t)\) in cash and \(\beta_t\) stocks (recall that \(\beta_0=0\)). He/she then consumes \(\varrho(\alpha_t)-\gamma_t\ge0\) in cash and \(\beta_t-\delta_t\ge0\) units of stock. The resulting portfolio of \(\gamma_t\) in cash and \(\delta_t\) stocks is then used to change in a self-financing way into a portfolio of \(\alpha_{t+1}\) in cash and \(\beta_{t+1}\) stocks, which is held until the next trading date. 

Suppose that the model admits a consistent discount factor \(Z_T\equiv\left(Z^S_T,Z^R_T\right)\), and define
\begin{multline*}
\mathcal{A}:=\left\{\left.\left(\alpha,\beta,\gamma,\delta\right)\right| \left(\alpha,\beta\right) \text{ predictable, }\beta_0=0, \left(\gamma,\delta\right) \text{ adapted, }\right.\\
\left.\vartheta_t\left(\gamma_t-\alpha_{t+1},\delta_t-\beta_{t+1}\right)\ge0\text{ for }t\in\mathbb{T}\right\}.
\end{multline*}
The set \(\mathcal{A}\) is clearly a polyhedral convex subset of \(\mathbb{R}^{4m+1}\), where \(m\) is the number of nodes in the model (recall our convention of adding final portfolios to trading strategies). On this set, define the linear functional \(L\) by
\[L\left(\alpha,\beta,\gamma,\delta\right):=\mathbb{E}_{\mathbb{Q}}\left(Z_T\cdot \left(\alpha_{T+1},\beta_{T+1}\right)\right)-\alpha_0.\]

Consider now the problem of maximising \(L\) over all \(\left(\alpha,\beta,\gamma,\delta\right)\in\mathcal{A}\), subject to the constraints
\begin{align}
R^d_t\alpha_t-\gamma_t &\ge 0, & R^c_t\alpha_t-\gamma_t &\ge 0, & \beta_t-\delta_t &\ge 0 \label{eq:fundamentals:ftap:pricing-system:slack:1}
\end{align}
for all \(t\in\mathbb{T}\). Observe that inequalities~\eqref{eq:fundamentals:ftap:pricing-system:slack:1} are equivalent to
\begin{align*}
\varrho_t(\alpha_t)-\gamma_t&\ge0, & \beta_t-\delta_t &\ge 0,
\end{align*}
and therefore our optimisation problem is none other than the problem of maximising \(L\) on the set \(\Theta\) of generalised trading strategies.

Fix a generalised trading strategy \(\left(\alpha,\beta,\gamma,\delta\right)\in\Theta\). If we now construct the self-financing strategy \((\varepsilon,\eta)\in\Phi\) by setting \(\varepsilon_0:=\alpha_0\) and
\begin{align*}
\eta_t&:=\beta_t \text{ for }t\in\mathbb{T}\cup\{T+1\},\\
\varepsilon_{t+1}&:=\vartheta_t\left(\varrho_t(\varepsilon_t),\eta_{t+1}-\eta_t\right)\text{ for }t\in\mathbb{T},
\end{align*}
then it readily follows that \(\varepsilon_t\ge\alpha_t\) and \(\eta_t\ge\beta_t\) for all \(t\in\mathbb{T}\cup\{T+1\}\). Consequently, Lemma~\ref{lem:fundamentals:ftap:discount-factors} yields
\[\sup_{\left(\alpha,\beta,\gamma,\delta\right)\in\Theta}L\left(\alpha,\beta,\gamma,\delta\right)\le\sup_{\left(\varepsilon,\eta\right)\in\Phi}\left[\mathbb{E}_{\mathbb{Q}}\left(Z_T\cdot \left(\varepsilon_{T+1},\eta_{T+1}\right)\right)-\varepsilon_0\right]\le0.\]
Moreover, this supremum is attained at \(0\in\Theta\subset\mathcal{A}\).

Application of Lemma~\ref{lem:fundamentals:ftap:kkt} guarantees the existence of an adapted \([0,\infty)^3\)-valued process \(W\equiv\left(W^d,W^c,W^S\right)\) such that
\begin{equation} \label{eq:fundamentals:ftap:pricing-system:kkt:short}
L\left(\alpha,\beta,\gamma,\delta\right)+\sum_{t\in\mathbb{T}}\mathbb{E}_{\mathbb{Q}}\left[W_t\cdot\left(R^d_t\alpha_t-\gamma_t, R^c_t\alpha_t-\gamma_t, \beta_t-\delta_t\right)\right]\le0
\end{equation}
for all \(\left(\alpha,\beta,\gamma,\delta\right)\in\mathcal{A}\). We may rearrange~\eqref{eq:fundamentals:ftap:pricing-system:kkt:short} as
\begin{multline}
W_0\cdot\left(R^d_0\alpha_0, R^c_0\alpha_0, \beta_0\right) - \alpha_0\\
+ \sum_{t\in\mathbb{T}\backslash\{T\}}\mathbb{E}_{\mathbb{Q}}\left[W_{t+1}\cdot\left(R^d_{t+1}\alpha_{t+1}, R^c_{t+1}\alpha_{t+1}, \beta_{t+1}\right) - W_t\cdot\left(\gamma_t, \gamma_t, \delta_t\right)\right]\\
+ \mathbb{E}_{\mathbb{Q}}\left[Z_T\cdot\left(\alpha_{T+1}, \beta_{T+1}\right) - W_T\cdot\left(\gamma_T, \gamma_T, \delta_T\right)\right] \le 0. \label{eq:fundamentals:ftap:pricing-system:kkt:expanded}
\end{multline}

Recall from the definition of \(\mathcal{A}\) that we may substitute \(0\) for any element \(\left(\alpha_{t+1},\beta_{t+1},\gamma_t,\delta_t\right)\) of \(\left(\alpha,\beta,\gamma,\delta\right)\in\mathcal{A}\) at any time \(t\in\mathbb{T}\). With this substitution in hand, we may now deduce a number of inequalities from~\eqref{eq:fundamentals:ftap:pricing-system:kkt:expanded}. First, for all \(\alpha_0\in\mathbb{R}\) we have
\[W_0\cdot\left(R^d_0\alpha_0, R^c_0\alpha_0, \beta_0\right) - \alpha_0 \le 0,\]
i.e.
\begin{equation}
\left(W^d_0+W^c_0-1\right)\alpha_0 \le 0.\label{eq:fundamentals:ftap:pricing-system:kkt:1}
\end{equation}
Moreover, for each \(t\in\mathbb{T}\backslash\{T\}\) and for all \(\mathcal{F}_t\)-measurable random variables \(\alpha_{t+1}\), \(\beta_{t+1}\), \(\gamma_t\) and \(\delta_t\) such that
\[\vartheta_t\left(\gamma_t-\alpha_{t+1},\delta_t-\beta_{t+1}\right)\ge0,\]
we have
\begin{equation}
\mathbb{E}_{\mathbb{Q}}\left[W_{t+1}\cdot\left(R^d_{t+1}\alpha_{t+1}, R^c_{t+1}\alpha_{t+1}, \beta_{t+1}\right) - W_t\cdot\left(\gamma_t, \gamma_t, \delta_t\right)\right] \le 0.\label{eq:fundamentals:ftap:pricing-system:kkt:2}
\end{equation}
Finally, for all \(\mathcal{F}_T\)-measurable random variables \(\alpha_{T+1}\), \(\beta_{T+1}\), \(\gamma_T\) and \(\delta_T\) such that
\[\vartheta_T\left(\gamma_T-\alpha_{T+1},\delta_T-\beta_{T+1}\right)\ge0,\]
we have
\begin{equation}
\mathbb{E}_{\mathbb{Q}}\left[Z_T\cdot\left(\alpha_{T+1}, \beta_{T+1}\right) - W_T\cdot\left(\gamma_T, \gamma_T, \delta_T\right)\right] \le 0.\label{eq:fundamentals:ftap:pricing-system:kkt:3}
\end{equation}

Inequality~\eqref{eq:fundamentals:ftap:pricing-system:kkt:1} clearly implies that
\[W^d_0 + W^c_0=1>0.\]
Fix now any \(t\in\mathbb{T}\backslash\{T\}\), and any \(\mathcal{F}_t\)-measurable random variable \(\lambda\). Choosing
$\alpha_{t+1}=\gamma_t=\lambda$ and $\beta_{t+1}=\delta_t=0$ in inequality~\eqref{eq:fundamentals:ftap:pricing-system:kkt:2} yields
\[\mathbb{E}_{\mathbb{Q}}\left(\left[R^d_{t+1}W^d_{t+1} + R^c_{t+1}W^c_{t+1} - W^d_t - W^c_t\right]\lambda\right) \le 0.\]
We readily deduce that
\begin{align}
W^d_t + W^c_t
&= \mathbb{E}_{\mathbb{Q}}\left(\left.R^d_{t+1}W^d_{t+1} + R^c_{t+1}W^c_{t+1}\right|\mathcal{F}_t\right) \nonumber\\
&= R^d_{t+1}\mathbb{E}_{\mathbb{Q}}\left(\left.W^d_{t+1}\right|\mathcal{F}_t\right) + R^c_{t+1}\mathbb{E}_{\mathbb{Q}}\left(\left.W^c_{t+1}\right|\mathcal{F}_t\right).\label{eq:fundamentals:ftap:pricing-system:kkt:4}
\end{align}
By analogy, we also obtain
\begin{equation}
\mathbb{E}_{\mathbb{Q}}\left(\left.W^S_{t+1}\right|\mathcal{F}_t\right)=W^S_t.\label{eq:fundamentals:ftap:pricing-system:kkt:5}
\end{equation}
Finally, at time \(T\), if we choose $\alpha_{T+1}=\gamma_T=\lambda$ and $\beta_{T+1}=\delta_T=0$ for any random variable \(\lambda\), then inequality~\eqref{eq:fundamentals:ftap:pricing-system:kkt:4} becomes
\[\mathbb{E}_{\mathbb{Q}}\left(\left[Z^R - W^d_T - W^c_T\right]\lambda\right) \le 0.\]
It follows from the non-degeneracy of \(\mathbb{Q}\) that
\begin{equation}\label{eq:fundamentals:ftap:pricing-system:final-inequality:1}
W^d_T + W^c_T = Z^R_T > 0.
\end{equation}
In a similar fashion, we deduce that
\begin{equation}\label{eq:fundamentals:ftap:pricing-system:final-inequality:2}
W^S_T = Z^S_T > 0.
\end{equation}

We are now in a position to directly define the process \(Z\equiv\left(Z^R,Z^S\right)\) with final value \(Z_T\). Indeed, for \(t\in\mathbb{T}\backslash\{T\}\) set
\begin{align*}
Z^R_t &:= W^d_t + W^c_t, &
Z^S_t &:= W^S_t.
\end{align*}
Equations~\eqref{eq:fundamentals:ftap:pricing-system:kkt:5} and~\eqref{eq:fundamentals:ftap:pricing-system:final-inequality:2} permit us to conclude that \(Z^S\) is a strictly positive martingale under \(\mathbb{Q}\). It follows from~\eqref{eq:fundamentals:ftap:pricing-system:kkt:4} that there exists a predictable process \(R\) such that 
\[R^d_{t+1}\le R_{t+1}\le R^c_{t+1}\]
and
\[Z^R_t=R_{t+1}\mathbb{E}_{\mathbb{Q}}\left(\left.Z^R_{t+1}\right|\mathcal{F}_t\right)\]
for \(t\in\mathbb{T}\backslash\{T\}.\) Also let \(R_0:=1\). We also conclude from~\eqref{eq:fundamentals:ftap:pricing-system:final-inequality:1} by induction that \(Z^R\) is strictly positive.
Finally, upon taking \(\alpha_{t+1}=0\) and \(\beta_{t+1}=0\) for \(t\in\mathbb{T}\backslash\{T\}\) in~\eqref{eq:fundamentals:ftap:pricing-system:kkt:2} and for \(t=T\) in~\eqref{eq:fundamentals:ftap:pricing-system:kkt:3}, Lemma~\ref{lem:fundamentals:ftap:conical} permits us to conclude that 
\[S^b_t\le\frac{Z^S_t}{Z^B_t}\le S^a_t \text{ for }t\in\mathbb{T}.\qedhere\]
\end{proof}

\bibliography{marketswithfrictions.bib}

\end{document}